\newtheorem{theorem}{\bfseries Theorem}
\newtheorem{lemma}{\bfseries Lemma}
\newtheorem{remark}{\bfseries Remark}
\newcommand{\gap}{\vspace{0.2cm}}
\newcommand{\HRule}{\noindent\rule{\linewidth}{0.1mm}\newline}
\newcommand{\argmin}{\mathop{\mathrm{argmin}}}
\renewcommand{\epsilon}{\varepsilon}
\newcommand{\etavar}{\eta}
\newcommand{\Orbit}{\mathscr{O}}
\newcommand{\Orbitz}{\mathscr{O}_Z}
\newcommand{\HS}{\mathscr{H}}
\renewcommand{\S}{S}
\newcommand{\Rx}{\Delta_X}
\newcommand{\Ve}{V_{\epsilon}}
\newcommand{\TI}{T_{I}^{\epsilon}(\eta,z)}
\newcommand{\mytag}[1]{\tag*{(\textbf{#1})}}
\newcommand{\update}[1]{\textcolor{black}{#1}}
\newcommand{\editQTN}[1]{\textcolor{black}{#1}}
\newcommand{\updateT}[1]{\textcolor{black}{#1}}
\begin{document}

\title{Robust Safety-Critical Control for Dynamic Robotics}
\author{Quan Nguyen and Koushil Sreenath
\thanks{Q. Nguyen is with the Department of Aerospace and Mechanical Engineering, University of Southern California, Los Angeles, CA 90007, email: {\tt quann@usc.edu}.}%
\thanks{K. Sreenath is with the Department of Mechanical Engineering, University of California, Berkeley, CA 94720, email: {\tt koushils@berkeley.edu}.}
\thanks{The work of Q. Nguyen is supported by USC Viterbi School of Engineering startup
	funds. The work of K. Sreenath was partially supported through National Science Foundation Grant IIS-1834557.}
}

\maketitle


\begin{abstract}
We present a novel method of optimal robust control through quadratic programs that offers tracking stability while subject to input and state-based constraints as well as safety-critical constraints for nonlinear dynamical robotic systems in the presence of model uncertainty. The proposed method formulates robust control Lyapunov and barrier functions to provide guarantees of stability and safety in the presence of model uncertainty.  We evaluate our proposed control design on dynamic walking of a five-link planar bipedal robot subject to contact force constraints as well as safety-critical precise foot placements on stepping stones, all while subject to model uncertainty. We conduct preliminary experimental validation of the proposed controller on a rectilinear spring-cart system under different types of model uncertainty and perturbations. 
\end{abstract}
\section{Introduction}

Designing controllers for complex robotic systems with nonlinear and hybrid dynamics for achieving stable high-speed tracking of reference trajectories while simultaneously guaranteeing input, state, and safety-critical constraints is challenging.  Constraints on robotic systems arise as limits of the physical hardware (such as work-space constraints, joint position and velocity constraints, and motor torque constraints) as well as constraints imposed by controllers for safe operation of the system (such as collision constraints, range constraints, connectivity constraints, contact force constraints, etc.)  Further, adding to the challenges of stable constrained control is the presence of high-levels of uncertainty in the dynamical model of the robot\updateT{.} 
The goal of this paper is to address the problem of designing stabilizing controllers for systems with strict constraints 
\updateT{in the presence of} model uncertainty.

\subsection{Background}
Lyapunov functions and control Lyapunov functions (CLFs) are a classical tool for design and analysis of feedback control that stabilize the closed-loop dynamics of both linear and nonlinear dynamical systems, see \cite{FK:Book}.
Traditional CLF-based controllers involve closed-form control expressions such as the min-norm and the Sontag controllers \cite{Sontag_Ctrl}.  Recently, a novel approach of expressing CLF-based controllers via online quadratic programs (QPs) in \cite{TorqueSaturation:GaSrAmGr:TRO14} opened an effective way for dealing with stability while also enabling the incorporation of additional constraints, such as input constraints. In \cite{TorqueSaturation:GaSrAmGr:TRO14}, the CLF-based controller is expressed in the QP as an inequality constraint on the time-derivative of the CLF, which easily enables adding additional constraints such as input saturation through the relaxation of the CLF inequality condition. The QP can be executed online at 1kHz in real-time as a state-dependent feedback controller. However, this work did not handle model uncertainty or safety-critical constraints.

In addition to CLFs, we also draw inspiration from recent methods of control barrier functions (CBFs) that can be incorporated with control Lyapunov function based quadratic programs to result in the CBF-CLF-QP, as introduced in \cite{CBF:CBFQPwithACC:CDC14}.  This framework enables handling safety-critical constraints effectively in real-time. Experimental validation of this type of controller for the problem of Adaptive Cruise Control was presented in \cite{CBF:Ames:ExperimentalACC:ACC15}. This framework has also been extended to various interesting application domains, such as safety-critical geometric control for quadrotor systems \cite{CBF:Guofan:SafetyCriteria:ACC15} and safety-critical dynamic walking for bipedal robots \cite{CBF:Quan:Footstep:ADHS15}, \cite{CBF:Ames:CBFBiped:ACC15}. Although this work can handle safety-critical constraints, however a precise model of the system is required to enforce the constraints.
%

Moreover, as presented in \cite{CBF:Ames:RobustnessCBF:ADHS15}, preliminary robustness analysis of the CBFs indicate that the safety-critical constraint will be violated in the presence of model uncertainty, with the amount of violation being bounded by the value of the upper bound of the model uncertainty.  In particular, model uncertainty leads to constraint violation of the safety-critical constraints.
%
%
Recent efforts to improve the robustness of CBFs using \updateT{learning-based approaches such Gaussian process regression have been introduced in \cite{fan2019bayesian,R2pmlr-v120-khojasteh20a,R3ChOrMuBu2019,cheng2020safe,R5pmlr-v120-taylor20a} which use data to estimate the uncertainty.  However, this typically requires large amounts of quality data for high-dimensional systems.  Moreover, we note that in \cite{fan2019bayesian,R3ChOrMuBu2019,cheng2020safe}, the authors assume uncertainty only in the drift vector field term of a control affine system and assume the input vector field to be known precisely.  Nonetheless, with learning-based approaches, better and more accurate state-dependent bounds for the uncertainty can be learnt which could result in less conservative approaches than what is typically possible through robust control. However, on the other hand, such learning-based approaches need to assume various distributions to be Gaussian and do not computationally scale well with state-dimension.}
In this paper, we seek a method to 
simultaneously handle robust stability, robust input-based constraints and robust state-dependent constraint in the presence of model uncertainty.  We will do this through robust control formulations of the CLF, CBF, and constraints.The framework can be easily achieved by extending the nominal CBF-CLF-QP controller while improving significantly the robustness of the entire approach with a provable stability and safety guarantee.

\editQTN{QP based controllers for robotic systems have been increasingly used in recent years. Especially, the DARPA Robotics Challenge inspired several new methods in this direction \cite{feng2015optimization}, \cite{step:footstep:Russ:Humanoid14}, wherein QP controllers are used to formulate inverse kinematic and inverse dynamic control problems while minimizing tracking of desired accelerations. While these QP controllers enforce tracking performance and constraints at each time step, our method uses CLF and CBF condition to guarantee stability and safety constraints for future time.}

\editQTN{For control of constrained systems, Model Predictive Control (MPC) has been widely used in many industrial applications \cite{qin2003MPCsurvey}. However, the method is computationally expensive. \updateT{Firstly, for nonlinear systems, MPC typically linearizes the model and the constraints.  Next, e}ven with a linear model and polyhedral constraints, the traditional approach can require a solving time from \updateT{tens of milliseconds and higher even with recent advances} \cite{wang2010fastMPC}\updateT{. For nonlinear control affine systems, CBF constraints are linear inequalities on the input and can be quickly enforced through quadratic programs that are solved point-wise in time in under 1 ms.}}

Robust control is an extensively studied topic.
We have established methods, such as $H_\infty$-based robust control and linear quadratic Guassian (LQG) based robust control \cite{book:OptimalControl, book:RobustAndOptimalControl:Doyle:96} for robust control of linear systems. For robust control of nonlinear systems, input-to-state stability (ISS) and sliding mode control (SMC) are two main methods. The ISS technique (see \cite{iss:Sontag:EJC95, iss:Sontag:SCL95, iss:Sontag:integral:TAC2000}) can be used to both analyze the robustness of nonlinear systems as well as design robust controllers based on control Lyapunov functions. However, a primary disadvantage of ISS based controllers is that the resulting controller only maintains the system errors in a sufficiently small neighborhood of the origin, resulting in non-zero tracking errors. In recent years, there has been work on robust control of hybrid systems based on the ISS technique, for instance see \cite{iss:switched:Liberzon:Automatica07, iss:impulsive:Liberzon:Automatica08, iss:hybrid:ARTeel:SCL09}. 
In contrast, sliding mode control techniques can deal with a wide range of uncertainties and drive the system to follow a given sliding surface, thereby driving outputs to desired values without any tracking errors (see \cite{smc:book:TheoryAndApplications:98, smc:DynamicSystems:IJC92, su1993sliding, fahimi2007sliding}).  However, the primary disadvantage of SMC is the chattering phenomenon caused by discrete switching for keeping the system attracted to the sliding surface.

\update{Robust control techniques have also been extensively applied to robotic manipulator arms, see \cite{Spong1992,LiBr1998}, however manipulator arms do not have challenges such as underactuation and unilateral ground contact forces making their control easier.  Robust control techniques have also been applied to bipedal walking.  For instance, the work in \cite{LiGe2013} extends adaptive robust control of manipulator arms to bipedal robots, the work in \cite{Ruina_Robust2dWalking} considers a simple 2D inverted-pendulum model and pre-computes a control policy through offline nonlinear optimization to prevent falls under the assumption of bounded disturbances.
Similarly, recent results on robust feedback motion planning for the problem of UAVs avoiding obstacles in \cite{Tedrake_RobustMotionPlanning} also precomputed a library of ``funnels'' via convex optimization that represents different maneuvers of the system under bounded disturbances.  A real-time planner then composes motion trajectories based on the resulting funnel library. 
\editQTN{Recent work in reachability-based analysis has been extended to enforce safety-constraints with model uncertainty by safe trajectory synthesis \cite{KoVaJRVa2017,FKHeFiDeTo2017}.}
Offline optimization for stabilization of walking and running with robustness to discrete-time uncertainties such as terrain perturbations has been carried out in \cite{HaGr2013,HaGr2015}.
Our method, in contrast, is based on real-time feedback controller to guarantee robust stability as well as robust safety of robotic systems through an online optimization without the need of precomputed motion plans.
Additional robust planning and control techniques exist for legged robots where the robustness is with respect to stochastic uncertainty in the model / terrain, for instance see \cite{KaPoSuTa2015,SaBl2015,MeLa2013,ByTe2009}.
}

\subsection{Contributions}
The main contributions of this paper with respect to prior work are as follows:
\begin{itemize}
	\item Introduction of a new technique that optimally introduces robust control via quadratic programs to handle stability, input-based constraints and state-dependent constraints under high levels of model uncertainty. 
	\item Robust stability and robust safety-critical constraints achieved through a min-max inequality constraint on the time-derivative of a control Lyapunov function and control barrier function respectively.
	\item Theoretical stability analysis for the QP controllers with relaxed CLF inequality.
	\item Numerical validation of the proposed controller on different problems: 
	\begin{itemize}
		\item Dynamic walking of a bipedal robot while carrying an unknown load and subject to contact force constraints; and
		\item Dynamic walking of a bipedal robot while carrying an unknown load, subject to contact force constraints and precise foot-step location constraints.
	\end{itemize}
	\item Experimental validation of the proposed control method on a rectilinear spring-cart system.
\end{itemize} 

Note that preliminary results of this work were presented in \cite{NgSr2014Robust_RSS,CBF:Quan:RobustCBF:ACC16}.  In contrast to the preliminary results, this paper presents (a) an entirely new min-max formulation of the proposed controller, (b) additional numerical examples for control of a bipedal robot, (c) experimental validations of the proposed controller on a rectilinear spring-cart system and (d) detailed stability and safety analysis for QP controllers with relaxed CLF inequality and the proposed robust CBF-CLF-QP controller.
\subsection{Organization}
The rest of the paper is organized as follows.  Section \ref{sec:CLF_CBF_revisited} revisits control barrier functions and control Lyapunov functions based quadratic programs (CBF-CLF-QPs). Section \ref{sec:proof_main_result} presents the stability analysis for the QP controllers with relaxed CLF inequality. Section \ref{sec:robust_control} discusses the adverse effects of uncertainty and then presents the proposed optimal robust control and formulates as a QP.  Section \ref{sec:simulation} presents numerical and experimental validation on different dynamical robotic systems. Section \ref{sec:conclusion} provides concluding remarks.

\section{Control Lyapunov Functions and Control Barrier Function based Quadratic Programs Revisited}
\label{sec:CLF_CBF_revisited}

\subsection{Model and Input-Output Linearizing Control}
\label{subsec:IO_linearization}
Consider a nonlinear control affine hybrid model
\begin{align}
\label{fg_affine}
\mathcal{H}:& \begin{cases}
\dot x =f(x)+g(x)u, & x\notin S\\
x^+=\Delta(x^-), & x\in S
\end{cases}\\
&y=y(x), \nonumber
\end{align}
where $x \in \mathbb{R}^n$ is the system state, $u \in \mathbb{R}^m$ is the control input, $S$ is the switching surface
, and $y \in \mathbb{R}^m$ is a set of outputs.

\updateT{While our approach is can be applied for a general high relative-degree system, we focus on the most common case to mechanical systems with relative-degree 2 control output.}
If the control output $y(x)$ has relative degree 2, then the time-derivative $\dot y(x)$ will be a function of the state $x$ and not dependent on the control input $u$. Considering the second time-derivative $\ddot y$, we have:
\begin{equation}
\label{linearied_output_h}
\ddot y=\frac{\partial \dot y}{\partial x} \dot x=L_f^2 y(x) + L_g L_f y(x) u,
\end{equation}
where $L$ represents the Lie derivatives. To be more specific:
\begin{align}
\label{lie_deriv_ddy_fg}
L_f^2 y(x) \triangleq \frac{\partial \dot y}{\partial x} f(x), \qquad
L_g L_f y(x) \triangleq \frac{\partial \dot y}{\partial x} g(x).
\end{align}
If the decoupling matrix $L_g L_f y(x)$ is invertible, then 
\begin{align}
\label{expControl}
u(x, \mu) =u_{ff}(x) + (L_g L_f y(x))^{-1} \mu,
\end{align} 
%
%
with the feed-forward control input
\begin{align}
u_{ff}(x)=-(L_gL_fy(x))^{-1} L_f^2 y(x),
\end{align}
input-output linearizes the system.
The dynamics of the system \eqref{fg_affine} can then be described in terms of dynamics of the transverse variables, $\eta \in \mathbb{R}^{2m}$, and the coordinates $z \in {\cal Z}$ with ${\cal Z}$ being the co-dimension $2m$ manifold 
\begin{equation}
	\label{eq:Z}
	{\cal Z} = \{x \in \mathbb{R}^n ~|~ \eta(x) \equiv 0\}.
\end{equation}
One choice for the transverse variables is,
\begin{align}
\eta=\begin{bmatrix}
y(x)\\
\dot y(x)
\end{bmatrix}.
\end{align} 

The input-output linearized hybrid system then is,
\begin{align}
\label{eq:lineaized_sys}
\mathcal{H}^{IO}:&
\begin{cases}
\dot \eta= \bar{f}(\eta) + \bar{g}(\eta) \mu, & \\
\dot z=p(\eta,z), & (\eta,z) \notin S,\\
		\eta^+ = \Delta_{\eta}(\eta^-, z^-), & \\
		z^+=\Delta_Z(\eta^-, z^-), & (\eta^-, z^-)\in S.
\end{cases} \\
&y=y(\eta), \nonumber
\end{align}
where $z$ represents uncontrolled states \cite{RESCLF:AmGaGrSr:TAC12}, and
\begin{equation} \label{eq:f_bar_g_bar}
	\bar{f}(\eta) = F\eta, \quad \bar{g}(\eta)=G,
\end{equation}
with,
\begin{align}
	\label{FGdefns}
	F=\begin{bmatrix}
		O & I\\
		O & O
	\end{bmatrix}
	\text{ and }
	G=\begin{bmatrix}
		O\\I
	\end{bmatrix}.
\end{align}

The linear system in \eqref{eq:f_bar_g_bar} is in controllable canonical form, and a linear controller such as $\mu=-K \eta$ can be designed such that the following closed-loop system is stable
\updateT{\begin{align}
\label{eq:A_cl}
\dot \eta= (F-GK) \eta =A_{cl} \eta.
\end{align}}
A corresponding quadratic Lyapunov function \updateT{$V(\eta)=\eta^T P \eta$, with $P$ being a symmetric positive definite matrix,} can then be established through the solution of the Lyapunov equation:
\updateT{\begin{align}
	\label{eq:P_define}
	A_{cl}^T P + P A_{cl}+Q = 0,
	\end{align}
for $Q$ being any symmetric positive definite matrix.}
\subsection{Control Lyapunov Function based Quadratic Programs}
\label{subsec:CLF_QP}
\subsubsection{CLF-QP}

Instead of a linear control design $\mu=-K\eta$ in \eqref{expControl}, an alternative control design is through a control Lyapunov function $V(\eta)$, wherein a control is chosen pointwise in time such that the time deriviative of the Lyapunov function $\dot V(\eta, \mu) \le 0$, resulting in stability in the sense of Lyapunov, or $\dot V(\eta, \mu) < 0$ for asymptotic stability, or $\dot V(\eta, \mu) + \lambda V(\eta) \le 0, \lambda > 0$ for exponential stability.

 To enable directly controlling the rate of convergence, we use a \textit{rapidly exponentially stabilizing control Lyapunov function (RES-CLF)}, introduced in \cite{RESCLF:AmGaGrSr:TAC12}.  RES-CLFs provide guarantees of \emph{rapid exponential stability} for the transverse variables $\eta$. In particular, a function $\Ve(\eta)$ is a RES-CLF for the system \eqref{fg_affine} if there exist positive constants $c_1, c_2, c_3 > 0$ such that for all $0 < \epsilon < 1$ and all states $(\eta,z)$ it holds that
\begin{align}
	\label{Vineqepsilon}
	c_1 \| \eta \|^2 \leq V_{\epsilon}(\eta) \leq \dfrac{c_2}{\epsilon^2} \|\eta \| ^2,  \\
	\label{Vinfepsilon}
	\dot{V}_{\epsilon}(\eta,\mu)+ \frac{c_3}{\epsilon} V_{\epsilon}(\eta) \leq 0.
\end{align}

The RES-CLF will take the form:  
\begin{equation}
	\label{VUnscaledCoords_eps}
	\Ve(\eta) = \eta^{T}\begin{bmatrix} \frac{1}{\epsilon}I & 0 \\ 0 & I  \end{bmatrix} P \begin{bmatrix} \frac{1}{\epsilon}I & 0 \\ 0 & I  \end{bmatrix} \eta =: \eta^{T} P_\epsilon \eta,
\end{equation}
\updateT{where $P$ satisfies the Lyapunov equation \eqref{eq:P_define}}
and the time derivative of the RES-CLF
 \eqref{VUnscaledCoords_eps} is computed as
\begin{align}
	\label{eq:CLF_time_deriv}
	\dot{V}_{\epsilon}(\eta,\mu)=\frac{\partial \Ve}{\partial \eta} \dot \eta=L_{\bar{f}}V_{\epsilon}(\eta)+L_{\bar{g}}V_{\epsilon}(\eta)\mu,
\end{align}
where
\begin{align}
	\label{LfVLgVdefns2}
	L_{\bar{f}} V_{\epsilon}(\etavar)& = \frac{\partial \Ve}{\partial \eta} \bar f (\eta)= \etavar^T (F^T P_{\epsilon} + P_{\epsilon} F) \etavar,  \nonumber\\
	L_{\bar{g}} V_{\epsilon}(\etavar)& = \frac{\partial \Ve}{\partial \eta} \bar g (\eta)=2 \etavar^T P_{\epsilon} G .
\end{align}

It can be show that for any Lipschitz continuous feedback control law $\mu$ that satisfies 
\eqref{Vinfepsilon}, it holds that
\begin{eqnarray}
	\label{RES condition for eta}
	V(\eta) \le e^{-\frac{c_3}{\epsilon}t}V(\eta(0)), \quad
	\| \eta(t) \| \leq \frac{1}{\epsilon}\sqrt{\frac{c_2}{c_1}} e^{-\frac{c_3}{2 \epsilon} t} \| \eta(0) \|,
\end{eqnarray}
i.e. the rate of exponential convergence can be directly controlled with the constant $\epsilon$ through $\frac{c_3}{\epsilon}$.
%
One such controller is the CLF-based quadratic program (CLF-QP)-based controller, introduced in \cite{TorqueSaturation:GaSrAmGr:TRO14}, where $\mu$ is directly selected through an online quadratic program to satisfy \eqref{Vinfepsilon}:

\HRule
\noindent \textbf{CLF-QP}:
\begin{align} 
\label{minnorm_convexoptim}
u^*(x) =~& \underset{u,\mu}{\argmin} & & \mu^T \mu \\
& \text{s.t.} & & \dot V_{\epsilon}(\eta, \mu)+\frac{c_3}{\epsilon} V_{\epsilon}(\eta) \le 0, \mytag{CLF}\\
&&& u =u_{ff}(x) + (L_g L_f y(x))^{-1} \mu. \mytag{IO}
\end{align}
\HRule

\updateT{T}he above minimization problem is a quadratic program since the ineqality constraint on the time-derivative of the Lyapunov function can be written as a linear ineqality constraint
\begin{align}
\label{eq:clf_linear_constraint}
A_{clf}~\mu \le b_{clf},
\end{align}
where
\begin{align}
A_{clf}=L_{\bar{g}} V_{\epsilon}(\eta);\quad
 b_{clf}=-L_{\bar{f}} V_{\epsilon}(\eta)-\frac{c_3}{\epsilon}\Ve(\eta).
\end{align}
Moreover, the IO equality constraint is linear in $u, \mu$, and can be written as:
\begin{align}
A_{IO}\begin{bmatrix}
u \\ \mu 
\end{bmatrix}=b_{IO},
\end{align}
where
\begin{align}
A_{IO}=\begin{bmatrix}
I & -(L_gL_fy(x))^{-1}
\end{bmatrix}, \quad b_{IO}=u_{ff}(x).
\end{align}
This optimization is solved pointwise in time \updateT{using} 
efficient quadratic program solvers, such as CVXGEN \cite{CVXGEN:MaBo2012}, 
\updateT{at} real-time speeds over $1$ kHz.
\begin{remark}
	Note that the solution of the above QP has been shown to be Lipschitz continuous in \cite{LipschitzContinuity:MoPoAm:CDC13}.
\end{remark}

\subsubsection{CLF-QP with Constraints}
\label{sec:clf-qp-constriants}
Formulating the control problem as a quadratic program now enables us to incorporate constraints into the optimization.  These constraints could be input constraints for input saturation or state-based constraints such as friction constraints, contact force constraints, etc., for robotic locomotion and manipulation. 
These types of constraints can be expressed in a general form as
\begin{align} 
\label{eq:constraints_u}
A_c (x) u \le b_c (x).
\end{align}

The CLF-QP based controller with additional constraints then takes the form,

\vspace{-5pt}
\HRule
\noindent \textbf{CLF-QP with Constraints}:
\begin{align} \label{eq:RelaxedCLFQP}
u^*(x)=	& \underset{u,\mu, \delta}{\argmin} & & \mu^T \mu + p \delta ^2\\
& \text{s.t.} & & \dot V_{\epsilon}(\eta,\mu) + \frac{c_3}{\epsilon} V_{\epsilon}(\eta) \le \delta, \mytag{CLF}\\
&&& A_c(x) u \le b_c(x), \mytag{Constraints}\\
&&& u =u_{ff}(x) + (L_g L_f y(x))^{-1} \mu, \mytag{IO}
\end{align}
\HRule
where $p$ is a large positive number that represents the penalty of relaxing the CLF inequality, which is necessary to keep the QP feasible when the constraints conflict with each other. 

The constraints above could be input saturation constraints expressed as,
\begin{align}
\label{CLF-QP-Saturation}
u^*(x)=& \underset{u,\mu, \delta}{\argmin} & & \mu^T \mu + p \delta^2\\
& \text{s.t.} & & \dot V_{\epsilon}(\eta,\mu) + \frac{c_3}{\epsilon} V_{\epsilon}(\eta) \le \delta, \mytag{CLF}\\
&&& u_{min} \le u \le u_{max} \mytag{Input Saturation},\\
&&& u =u_{ff}(x) + (L_g L_f y(x))^{-1} \mu \mytag{IO}.
\end{align}

Note that, similar to \eqref{eq:clf_linear_constraint}, the CLF inequality condition in the above CLF-QPs is affine in $\mu$, ensuring that these are actually quadratic programs.

This formulation opened a novel method to guarantee stability of nonlinear systems with respect to additional constraints such as torque saturation \cite{TorqueSaturation:GaSrAmGr:TRO14}, wherein experimental demonstration of bipedal walking with strict input constraints was demonstrated, 
and in enabling the application of L1 adaptive control for bipedal robots \cite{NgSr2014Adaptive_ACC}.


\update{
	\begin{remark}
		\label{remark:stability-relaxed-clf-qp}
		\textbf{(Stability of CLF-QP with relaxed CLF inequality.)} In subsequent sections of this paper, we develop different types of controllers based on the relaxed CLF-QP \eqref{eq:RelaxedCLFQP}. Allowing the violation of the RES-CLF condition \eqref{Vinfepsilon} enables us to incorporate various other input and state constraints as we saw in \eqref{eq:RelaxedCLFQP}.  Additionally, the relaxed RES-CLF condition also enables incorporating safety constraints through barriers (see Section \ref{sec:CBF}), as well as enabling modification of the controller to increase the robustness of the closed-loop system (see Section \ref{sec:robust_control}). However, it must be noted that the relaxation of the RES-CLF condition could lead to potential instability.  In Section \ref{sec:proof_main_result}, we establish sufficient conditions under which the relaxed CLF-QP controller can still retain the exponential stability of the hybrid periodic orbit.
	\end{remark}
}

Having revisited control Lyapunov function based quadratic programs, we will next revisit control Barrier functions.

\subsection{Control Barrier Function}
\label{sec:CBF}

We begin with the control affine system \eqref{fg_affine}
with the goal to design a controller to keep the state $x$ in the safe set
\begin{equation} \label{eq:safe_set}
\mathcal{C}=\left\lbrace x \in \mathbb{R}^n: h(x) \ge 0\right\rbrace, 
\end{equation} 
where $h:\mathbb{R}^n \to \mathbb{R}$ is a continuously differentiable function.
Then a function $B:\mathcal{C} \to \mathbb{R}$ is a Control Barrier Function (CBF) \cite{CBF:CBFQPwithACC:CDC14} if there exists class $\mathcal{K}$ function $\alpha_1$ and $\alpha_2$ such that, for all $x\in Int(\mathcal{C})=\left\lbrace x \in \mathbb{R}^n: h(x) > 0\right\rbrace $,
\begin{align}
\frac{1}{\alpha_1(h(x))} \le B(x) \le \frac{1}{\alpha_2(h(x))}, \\
\dot B(x,u)\le \frac{\gamma}{B(x)}, \label{eq:Bdot}
\end{align}
where \updateT{$\gamma > 0$} and
\begin{align}
\label{eq:Bdot_expr}
\dot B(x,u)=\frac{\partial B}{\partial x}\dot x = L_fB(x)+L_gB(x)u,
\end{align}
with the Lie derivatives computed as,
\begin{align}
\label{lie_deriv_dB_fg}
L_fB(x)=\frac{\partial B}{\partial x}f(x),\qquad
L_gB(x)=\frac{\partial B}{\partial x}g(x).
\end{align}
Thus, if there exists a Barrier function $B(x)$ that satisfies the CBF condition in \eqref{eq:Bdot},
then $\mathcal{C}$ is forward invariant, or in other words, if $x(0)=x_0 \in \mathcal{C}$, i.e., $h(x_0) \ge 0$, then $x=x(t) \in \mathcal{C}, \forall t$, i.e., $h(x(t)) \ge 0, \forall t$.  Note that, as mentioned in \cite{CBF:CBFQPwithACC:CDC14}, this notion of a CBF is stricter than standard notions of CBFs in prior literature that only require $\dot{B} \le 0$.

In this paper, we will use the following reciprocal control Barrier candidate function:
\begin{align}
\label{CBF_candidate_inv}
B(x)=\frac{1}{h(x)}.
\end{align}

Incorporating the CBF condition \eqref{eq:Bdot} into the CLF-QP, 
%
%
%
we have the following CBF-CLF-QP based controllers:
\HRule
\noindent \textbf{CBF-CLF-QP}:
\begin{align}
\label{CBF_CLF_QP_controller}
u^*(x)=	& \underset{u,\mu,\delta}{\argmin} & & \mu^T \mu + p~\delta^2 \\
& \text{s.t.} & & \dot V_{\epsilon}(\eta, \mu)+\frac{c_3}{\epsilon} V_{\epsilon}(\eta) \le \delta,  \mytag{CLF} \\
& & &\dot B(x, u)-\frac{\gamma}{B(x)} \le 0,  \mytag{CBF} \\
&&& A_c(x) u \le b_c(x), \mytag{Constraints}\\
&&& u =u_{ff}(x) + (L_g L_f y(x))^{-1} \mu. \mytag{IO}
\end{align}
\HRule

\editQTN{\begin{remark}
	Note that, similar to \eqref{eq:clf_linear_constraint}, for any nonlinear affine system and nonlinear state-dependent constraint, the CBF condition \eqref{eq:Bdot} is scalar and affine in $u$, ensuring a compact quadratic program that can be solved in 1 kHz. More importantly, the satisfaction of this instantaneous condition at each time guarantees the state dependent constraint $h(x)\ge 0$ for all $t\ge t_0$.
\end{remark}}

Having presented control Lyapunov functions, control Barrier functions, and their incorporation into a quadratic program with constraints, we now prove the stability of the QP controller with relaxed CLF inequality. 



\section{Sufficient conditions for the stability of CLF with relaxed inequality}
\label{sec:proof_main_result}
In this Section, we will present two theorems and their proof about the stability of CLF based controller with relaxed RES-CLF condition for both continuous-time and hybrid systems.
\updateT{We recall that as discussed in Section \ref{sec:clf-qp-constriants} and Remark \ref{remark:stability-relaxed-clf-qp}, relaxing the CLF inequality constraint is necessary to keep the QP feasible when additional potentially conflicting constraints are added to the QP.  However, in this case, we could have a potential loss in stability guarantees if a certain sufficient condition, that is established next, is not satisfied.}

We begin with the standard RES-CLF that guarantees the following inequality,
\begin{equation}
\label{StandardClfInequality}
\dot{V}_{\epsilon}(\eta,\mu)+\frac{c_3}{\epsilon}V_{\epsilon}(\eta)\leq 0.
\end{equation}
The CLF-QP with the relaxed inequality takes the form,
\begin{align}
\label{RelaxedInequality}
\dot{V_{\epsilon}}(\eta,\mu)+\frac{c_3}{\epsilon}V_{\epsilon}(\eta)\leq d_{\epsilon},
\end{align}
where $d_{\epsilon}(t) \ge 0$ represents the time-varying relaxation of the RES-CLF condition.
We define
\begin{equation}
\label{W2Defn}
w_{\epsilon}(t)=\int_0^t \! \frac{d_{\epsilon}(\tau)}{V_{\epsilon}} \, \mathrm{d}\tau.
\end{equation}
to represent a scaled version of the total relaxation up to time $t$.  In the following subsections, we make use of this quantity to establish exponential stability under certain conditions for both continuous-time and hybrid systems.  In Section \ref{app:stability-continuous-systems} we will look at continuous-time systems where we will need exponential stability ($\epsilon = 1$ in the above formulations), while in Section \ref{app:stability-hybrid-systems} we will look at hybrid systems where we need rapid exponential stability ($\epsilon < 1$).

\subsection{Stability of the Relaxed CLF-QP Controller for Continuous-Time Systems}
\label{app:stability-continuous-systems}
	Consider a control affine system
	\begin{align}
		\label{eq:sys_zero_dyn}
		\dot \eta &=\bar f(\eta)+\bar g(\eta)\mu, \\ \nonumber
		\dot z &=p(\eta,z), 
	\end{align}
	where $\eta$ is the controlled (or output) state, and $z$ is the uncontrolled state. Let $\Orbitz$ be an exponentially stable periodic orbit for the zero dynamics $\dot z=p(0,z)$. Let $\Orbit=\iota_0(\Orbitz)$ be a periodic orbit for the full-order dynamics corresponding to the periodic orbit of the zero dynamics, $\Orbitz$,
	through the canonical embedding $\iota_0: Z \to X \times Z$ given by $\iota_0(z)=(0,z)$. As stated in \cite[Theorem 1]{RESCLF:AmGaGrSr:TAC12}, for all control inputs $\mu(\eta,z)$ that guarantee enforcing the ES-CLF condition (with $\epsilon = 1$) \eqref{StandardClfInequality}, 
	we have that $\Orbit = \iota_0(\Orbitz)$ is an exponentially stable periodic orbit of \eqref{eq:sys_zero_dyn}. Then, for the relaxed CLF-QP condition \eqref{RelaxedInequality} ($d_\epsilon \not\equiv 0$), the following theorem establishes sufficient conditions under which the exponential stability of the periodic orbit still holds. 
	
\begin{theorem}
	\label{thm:stability_relaxedCLF_continuous}
Consider a nonlinear control affine system \eqref{eq:sys_zero_dyn}. Let $\Orbitz$ be an exponentially stable periodic orbit for the zero dynamics $\dot z=p(0,z)$. Then $\Orbit = \iota_0(\Orbitz)$ is an exponentially stable periodic orbit of \eqref{eq:sys_zero_dyn}, if
\begin{equation}
\bar w_{\epsilon} := \sup_{t\ge 0} ~ w_{\epsilon}(t)
\end{equation}
is a finite number.
\end{theorem}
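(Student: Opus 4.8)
The plan is to split the argument in two: first show that the relaxed inequality still drives the transverse variable $\eta$ to zero exponentially at the nominal rate — only with a larger transient constant — and then feed this into the standard cascade/reduction argument of \cite[Theorem 1]{RESCLF:AmGaGrSr:TAC12} to obtain exponential stability of the full-order orbit $\Orbit$ from that of the zero-dynamics orbit $\Orbitz$.

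First I would work directly with the relaxed condition \eqref{RelaxedInequality}. Wherever $\Ve(\eta(t)) > 0$ — and at points where $\eta = 0$ we have $\dot{\Ve}(0,\mu)\equiv 0$, since $L_{\bar{f}}\Ve$ and $L_{\bar{g}}\Ve$ both vanish there, so the unrelaxed condition already holds and one may take $d_\epsilon = 0$ — divide \eqref{RelaxedInequality} by $\Ve$ to get $\frac{d}{dt}\ln \Ve \le -\frac{c_3}{\epsilon} + \frac{d_\epsilon}{\Ve}$. Integrating and using the definition \eqref{W2Defn} of $w_\epsilon$, the monotonicity of $w_\epsilon$ (non-decreasing because $d_\epsilon \ge 0$), and the hypothesis $\bar w_\epsilon < \infty$, yields
\begin{equation*}
\Ve(\eta(t)) \;\le\; \Ve(\eta(0))\, e^{w_\epsilon(t)}\, e^{-\frac{c_3}{\epsilon}t} \;\le\; \Ve(\eta(0))\, e^{\bar w_\epsilon}\, e^{-\frac{c_3}{\epsilon}t}.
\end{equation*}
Combining with the sandwich bound \eqref{Vineqepsilon} then gives
\begin{equation*}
\| \eta(t) \| \;\le\; \frac{1}{\epsilon}\sqrt{\tfrac{c_2}{c_1}}\; e^{\bar w_\epsilon/2}\; e^{-\frac{c_3}{2\epsilon}t}\, \| \eta(0) \|,
\end{equation*}
so $\eta \to 0$ exponentially with precisely the nominal rate $\tfrac{c_3}{2\epsilon}$ and only the constant inflated by $e^{\bar w_\epsilon/2}$. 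In short, a finite total scaled relaxation is harmless for the $\eta$-subsystem.

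Next I would invoke a converse Lyapunov theorem for the exponentially stable orbit $\Orbitz$ of $\dot z = p(0,z)$ to obtain a Lyapunov function $V_z(z)$ with the usual quadratic-type bounds in the distance to $\Orbitz$, form a composite candidate $W := V_z(z) + \sigma\sqrt{\Ve(\eta)}$ with $\sigma > 0$ small, and differentiate $W$ along the full dynamics \eqref{eq:sys_zero_dyn}. The only coupling term is $\tfrac{\partial V_z}{\partial z}\big(p(\eta,z) - p(0,z)\big)$, which is controlled via local Lipschitz continuity of $p$ in $\eta$ — so it is bounded by $L_p\|\eta\|$ — together with the exponential decay of $\|\eta\|$ just established. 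This is exactly the computation underlying \cite[Theorem 1]{RESCLF:AmGaGrSr:TAC12}; since our $\eta$-estimate differs from the unrelaxed one only by the constant factor $e^{\bar w_\epsilon/2}$, that computation carries over essentially verbatim, producing $\dot W \le -\kappa W$ for some $\kappa > 0$ near $\Orbit$, hence exponential stability of $\Orbit = \iota_0(\Orbitz)$. For the continuous-time statement one simply specializes to $\epsilon = 1$.

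The hard part will not be the comparison estimate but making two delicate points rigorous: (i) that $w_\epsilon$ is genuinely well defined, i.e. that the relaxation $d_\epsilon$ is compatible with $\Ve \to 0$ along the trajectory — this is precisely what the hypothesis $\bar w_\epsilon < \infty$ is designed to encode; and (ii) that the reduction argument only feels the exponential \emph{rate} of the $\eta$-decay (which is preserved in the estimate above) and not the size of the transient, so that absorbing $e^{\bar w_\epsilon/2}$ into the constants of \cite[Theorem 1]{RESCLF:AmGaGrSr:TAC12} leaves its conclusion intact. Verifying (ii) is where the bulk of the careful bookkeeping goes.
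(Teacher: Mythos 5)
Your proposal is correct and follows essentially the same route as the paper: integrate the relaxed inequality after dividing by $V_\epsilon$ to obtain $V_\epsilon(t) \le e^{-\frac{c_3}{\epsilon}t + w_\epsilon(t)}V_\epsilon(0)$, bound $w_\epsilon(t)$ by $\bar w_\epsilon$ to get exponential decay of $\|\eta\|$ with the constant inflated by $e^{\bar w_\epsilon/2}$, and then invoke \cite[Theorem 1]{RESCLF:AmGaGrSr:TAC12} for the full-order orbit. Your additional unpacking of the cascade argument and the care about dividing by $V_\epsilon$ at $\eta=0$ go slightly beyond what the paper writes out, but the substance is identical.
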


\begin{proof}
Note that if $d_{\epsilon}(t) \equiv 0$, i.e., there is no violation on the RES-CLF condition, we will have conventional exponential stability as stated in \cite{RESCLF:AmGaGrSr:TAC12}. Here, we will extend the proof of exponential stability to the case of relaxation of the control Lyapunov function condition when $\bar w_{\epsilon}$ is finite.

We begin by noting that since $d_{\epsilon}(t)\ge 0$, we have
\begin{equation}
\label{w_bound}
w_{\epsilon}(t) \le \bar w_{\epsilon}, ~ \forall t \ge 0.
\end{equation}
Next, from \eqref{RelaxedInequality}, we have
\begin{align}
	~\frac{dV_{\epsilon}}{dt} &\le -\frac{c_3}{\epsilon}V_{\epsilon}+d_{\epsilon}(t), \nonumber \\
	\Rightarrow ~\frac{dV_{\epsilon}}{V_{\epsilon}} &\le  -\frac{c_3}{\epsilon}dt+\frac{d_{\epsilon}(t)}{V_{\epsilon}}dt, \nonumber \\
	\Rightarrow ~ ln\left( \frac{V_{\epsilon}(t)}{V_{\epsilon}(0)}\right) &\le  -\frac{c_3}{\epsilon}t+\int_0^t \! \frac{d_{\epsilon}(\tau)}{V_{\epsilon}} \, \mathrm{d}\tau, \nonumber \\
%
\label{VetVe0}
\Rightarrow ~ V_{\epsilon}(t) &\le  e^{-\frac{c_3}{\epsilon}t+w_{\epsilon}(t)}V_{\epsilon}(0).
\end{align}

This, in combination with the inequality in \eqref{Vineqepsilon}, then implies that
\begin{equation}
\label{XtX0}
\|\eta(t) \| \le \sqrt{\frac{c_2}{c_1}} \frac{1}{\epsilon} e^{-\frac{c_3}{2\epsilon}t+\frac{1}{2}w_{\epsilon}(t)}  \|\eta(0) \|.
\end{equation}
Moreover, from the inequality in \eqref{w_bound}, we have
\begin{eqnarray} \label{eq:exp-stability-relaxed-CLF}
\|\eta(t) \| & \le & \sqrt{\frac{c_2}{c_1}} \frac{1}{\epsilon} e^{-\frac{c_3}{2\epsilon}t+\frac{1}{2}\bar w_{\epsilon}}  \|\eta(0) \|, \nonumber \\
& = & \left( \sqrt{\frac{c_2}{c_1}} \frac{1}{\epsilon} e^{\frac{1}{2}\bar w_{\epsilon}} \right) e^{-\frac{c_3}{2\epsilon}t}  \|\eta(0) \|.
\end{eqnarray}
Therefore, if $\bar w_{\epsilon}$ is finite, the control output $\eta\updateT{=0}$ will still be exponentially stable under the relaxed CLF condition \eqref{RelaxedInequality}. Therefore, from \cite[Theorem 1]{RESCLF:AmGaGrSr:TAC12}, $\Orbit = \iota_0(\Orbitz)$ is an exponentially stable periodic orbit of \eqref{eq:sys_zero_dyn}.
\begin{remark}
	It must be noted that the CLF $V_\epsilon$ no longer provides a guarantee of exponential stability due to the relaxation in \eqref{RelaxedInequality}.  However due to the inequality established in \eqref{eq:exp-stability-relaxed-CLF}, and by the converse Lyapunov function theorems \cite{KHA02}, there exists another CLF $\tilde V_\epsilon$ that satisfies
	$\tilde c_1 ||\eta||^2 \le \tilde V_\epsilon(\eta) \le \frac{\tilde c_2}{\epsilon^2} ||\eta||^2$, and 
	$\dot{\tilde V}_\epsilon(\eta, \mu) + \frac{\tilde c_3}{\epsilon} \tilde V_\epsilon(\eta) \le 0$, for some positive constants $\tilde c_1, \tilde c_2, \tilde c_3$, and
	guarantees exponential stability.
\end{remark}
\end{proof}


In the next section, we consider the hybrid system with impulse effects, wherein we will need to take into account the impact time or the switching time that signifies the end of the continuous-time phase and involves a discrete-time jump in the state.

\subsection{Stability of the Relaxed CLF-QP Controller for Hybrid Systems}
\label{app:stability-hybrid-systems}

Here we look at the stability of the relaxed CLF-QP controller for hybrid systems of the form as defined in \eqref{eq:lineaized_sys} without the restriction on the vector fields as in \eqref{eq:f_bar_g_bar}.
Similar to the stability analysis for continuous-time systems in the previous section, we also use here the notions of $d_{\epsilon}(t)$ \eqref{RelaxedInequality}, the relaxation of the CLF condition, and $w_{\epsilon}(t)$ \eqref{W2Defn}, the scaled version of the total relaxation up to time $t$.

We also define $T_I^\epsilon(\eta,z)$ to be the time-to-impact or time taken to go from the state $(\eta,z)$ to the switching surface $S$. Then, intuitively, $w_\epsilon(T_I^\epsilon(\eta,z))$ indicates a scaled version of the total violation of the RES-CLF bound in \eqref{StandardClfInequality} over one complete step. If $w_\epsilon(T_I^\epsilon(\eta,z)) \le 0$, it implies that $V_\epsilon(T_I^\epsilon(\eta,z))$ is less than or equal to what would have resulted if the RES-CLF bound had not been violated at all.  As we will see in the following theorem, we will in fact only require $w_\epsilon(T_I^\epsilon(\eta,z))$ to be upper bounded by a positive constant for exponential stability.

We first define the hybrid zero dynamics as the hybrid dynamics \eqref{eq:lineaized_sys} restricted to the surface $Z$ in \eqref{eq:Z}, i.e.,
\begin{align}
	\label{eq:hybrid_zero_dyn}
	\mathcal{H}_Z:&
	\begin{cases}
		\dot z = p(0,z), & z \notin S \cap Z, \\
		z^+ = \Delta_z(0,z^-), & z^- \in S \cap Z.
	\end{cases}
\end{align}
We then have the following theorem:
\begin{theorem}
	\label{thm:MainRelaxedIneq}
	{\it Let $\Orbitz$ be an exponentially stable hybrid periodic orbit of the hybrid zero dynamics $\HS|_Z$ \eqref{eq:hybrid_zero_dyn} transverse to $\S \cap Z$ and the continuous dynamics of ${\mathscr H}$ \eqref{eq:lineaized_sys} controlled by a CLF-QP with relaxed inequality \eqref{eq:RelaxedCLFQP}. Then there exists an $\overline{\epsilon} > 0$ and an $\bar{w}_{\epsilon} \ge 0$ such that for each $0 < \epsilon < \overline{\epsilon}$, if the solution $\mu_\epsilon(\eta,z)$ of the CLF-QP \eqref{eq:RelaxedCLFQP} satisfies $w_{\epsilon}(\TI) \le \bar{w}_{\epsilon}$, then $\Orbit = \iota_0(\Orbitz)$ is an exponentially stable hybrid periodic orbit of ${\mathscr H}$.}
\end{theorem}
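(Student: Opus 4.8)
The plan is to reduce the statement to showing that a \Poincare{} return map of the hybrid system has an exponentially stable fixed point, following the architecture of \cite[Theorem~2]{RESCLF:AmGaGrSr:TAC12}, the new ingredient being that the relaxation term $d_\epsilon$ in \eqref{RelaxedInequality} is carried through every estimate by means of the scaled accumulated relaxation $w_\epsilon(\cdot)$ from \eqref{W2Defn}. The guiding observation is that, over a single continuous phase, the relaxation inflates the nominal RES-CLF contraction of the transverse variables only by the \emph{bounded} multiplicative factor $e^{\bar w_\epsilon/2}$; since the RES-CLF decay rate $c_3/\epsilon$ can be made arbitrarily large by shrinking $\epsilon$, this bounded inflation cannot prevent the $\eta$-subsystem from contracting strongly enough over one step to inherit the stability of the exponentially stable hybrid zero dynamics \eqref{eq:hybrid_zero_dyn}.

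\textbf{Step 1 (per-step transverse estimate).} First I would take the integrated bound \eqref{VetVe0} established in the proof of Theorem~\ref{thm:stability_relaxedCLF_continuous}, namely $V_\epsilon(t) \le e^{-\frac{c_3}{\epsilon}t + w_\epsilon(t)}\,V_\epsilon(0)$, and evaluate it at the end of one continuous phase, i.e. at the time-to-impact $\TI$. Using the hypothesis $w_\epsilon(\TI) \le \bar w_\epsilon$ together with the sandwich inequality \eqref{Vineqepsilon}, this yields the key one-step bound
\begin{equation}
\label{eq:proposal_onestep}
\| \eta(\TI) \| \;\le\; \frac{1}{\epsilon}\sqrt{\frac{c_2}{c_1}}\; e^{\frac{1}{2}\bar w_\epsilon}\; e^{-\frac{c_3}{2\epsilon}\TI}\; \| \eta(0) \|,
\end{equation}
so the sole effect of the relaxation is the bounded prefactor $e^{\bar w_\epsilon/2}$ multiplying the nominal RES-CLF contraction.

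\textbf{Step 2 (\Poincare{} map and its contraction).} Next I would assemble the standard ingredients for the hybrid \Poincare{} analysis: (i) transversality of $\Orbitz$ to $\S\cap Z$ plus continuity gives a continuous, uniformly bounded time-to-impact $T_{\min} \le \TI \le T_{\max}$ on a neighborhood of $\Orbit$, uniformly for all small $\epsilon$ (the Lipschitz QP solution $\mu_\epsilon$ of \eqref{eq:RelaxedCLFQP} guarantees the flow is well defined); (ii) the reset map $\Delta$ is Lipschitz, hence $\| \eta^+ \| \le L_\Delta\big(\|\eta^-\| + \|z^- - z^*\|\big)$ near the fixed point $z^*$ of the zero-dynamics \Poincare{} map $\rho$; and (iii) $\rho$ has $z^*$ as an exponentially stable fixed point by hypothesis. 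Combining (i)--(iii) with \eqref{eq:proposal_onestep}, the full \Poincare{} map $P_\epsilon(\eta,z)=(\eta',z')$ obeys $z' = \rho(z) + O(\|\eta\|)$ and $\|\eta'\| \le L_\Delta\, e^{\bar w_\epsilon/2}\, e^{-\frac{c_3}{2\epsilon}T_{\min}}\big(\|\eta\| + \|z - z^*\|\big)$. I would then choose $\bar\epsilon$ and $\bar w_\epsilon$ so that, for $0<\epsilon<\bar\epsilon$, the gain $L_\Delta\, e^{\bar w_\epsilon/2}\, e^{-c_3 T_{\min}/(2\epsilon)}$ is as small as needed — possible even with $\bar w_\epsilon$ held at a fixed positive constant, since $c_3/\epsilon \to \infty$ — and verify that the weighted composite candidate $W(z-z^*) + \sigma\|\eta\|$ (with $W$ a Lyapunov function for $\rho$ at $z^*$ and $\sigma>0$ small) strictly decreases under $P_\epsilon$. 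This makes $(0,z^*)$ an exponentially stable fixed point of $P_\epsilon$, which by the standard \Poincare{}-section equivalence for systems with impulse effects is equivalent to $\Orbit = \iota_0(\Orbitz)$ being an exponentially stable hybrid periodic orbit of $\HS$.

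\textbf{Main obstacle.} The principal difficulty is the coupled, consistent selection of the relaxation budget $\bar w_\epsilon$ and the threshold $\bar\epsilon$: one must exhibit a nonempty regime in which (a) the amplification $e^{\bar w_\epsilon/2}$ does not defeat the rapid contraction $e^{-c_3 T_{\min}/(2\epsilon)}$, and simultaneously (b) the $\eta$--$z$ cross-coupling in $P_\epsilon$ — absent in the hybrid zero dynamics — stays small enough for the composite-Lyapunov estimate to close. Since the RES-CLF rate $c_3/\epsilon$ diverges as $\epsilon\to 0$ while the reset gain $L_\Delta$ and the coupling terms remain bounded, this regime is nonempty; making the argument quantitative — in particular, tracking precisely how large $\bar w_\epsilon$ may be allowed to grow as a function of $\epsilon$ — together with the routine but somewhat delicate uniform control of $\TI$ on a neighborhood of $\Orbit$, is where the real work lies.
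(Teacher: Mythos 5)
Your proposal is correct and follows essentially the same route as the paper: the paper's appendix likewise evaluates the integrated bound \eqref{VetVe0} at $t=\TI$ to obtain your one-step estimate (its Lemma~\ref{lem:BoundXt}), and then imports the \Poincare{}-map/composite-Lyapunov argument wholesale from \cite[Theorem 2]{RESCLF:AmGaGrSr:TAC12}, with the relaxation absorbed into a modified contraction constant. The one quantitative point you flag as the remaining work --- how large $\bar w_\epsilon$ may be as a function of $\epsilon$ --- is resolved there by introducing $\bar c_3 \le c_3$ satisfying both $\frac{1}{\epsilon}e^{-\frac{\bar c_3}{2\epsilon}(T^*-\Delta T)} \le \frac{2e^{-1}}{(T^*-\Delta T)c_3}$ and $\bar\beta_1(\epsilon) < c_1$, and then setting $\bar w_\epsilon = \frac{c_3-\bar c_3}{\epsilon}(T^*-\Delta T)$, so the admissible relaxation budget in fact grows like $1/\epsilon$ rather than being held at a fixed constant as you suggest.
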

\gap
\begin{proof}
	See Appendix \ref{sec:proof_2_appendix}. 
\end{proof}

Having presented the stability analysis for the QP controller with relaxed CLF inequality, in the next Section, we will explore the effects of model uncertainty and propose robust QP controllers to guarantee stability, safety and constraints under model uncertainty.
	
\section{Proposed Robust Control based Quadratic Programs}
\label{sec:robust_control}

The controllers presented in Section \ref{sec:CLF_CBF_revisited} provide means of an optimal control scheme that balances conflicting requirements between stability, state-based constraints and energy consumption. It is a powerful method that has been deployed successfully for different applications, for example Adaptive Cruise Control \cite{CBF:Ames:ExperimentalACC:ACC15}, quadrotor flight \cite{CBF:Guofan:SafetyCriteria:ACC15}, and dynamic walking for bipedal robots \cite{CBF:Quan:Footstep:ADHS15}, \cite{CBF:Ames:CBFBiped:ACC15}.

However, a  primary disadvantage of the CBF-CLF-QP controller is that it requires the knowledge of an accurate dynamical model of the plant to make guarantees on stability as well as safety. In particular, enforcing the CLF and CBF conditions in the QP \eqref{CBF_CLF_QP_controller} requires the accurate knowledge of the nonlinear dynamics $f(x), g(x)$ to compute the Lie derivatives in \eqref{lie_deriv_ddy_fg}, \eqref{LfVLgVdefns2} and \eqref{lie_deriv_dB_fg}. Therefore, model uncertainty that is usually present in physical systems can potentially cause poor quality of control leading to tracking errors, and lead to instability \cite{NgSr2014Robust_RSS}, as well as violation of the safety-critical constraints \cite{CBF:Ames:RobustnessCBF:ADHS15}.  In this section, we will explore the effect of uncertainty on the CLF-QP controller, input and state constraints, and safety constraints enforced by the CBF-QP controller and then proposed robust controllers based quadratic programs to address those effects.
Firstly, we will analyze the effects of model uncertainty on the CLF condition.

\subsection{Adverse Effects of Uncertainty on the CLF-QP controller}
\label{sec:adverse_effect}

\begin{table}
	\centering
	\begin{tabular}{ |c|c|c| } 
		\hline
		Notations & Model types\\
		\hline
		{\large $f,g$} & true \textit{(unknown)} nonlinear model\\ 
		{\large $\tilde{f}, \tilde{g}$} & nominal nonlinear model\\ 
		{\large $\bar f, \bar g$} & true \textit{(unknown)} I-O linearized model\\
		{\large $\tilde{\bar f}, \tilde{\bar g}$} & nominal I-O linearized model \\
		\hline
	\end{tabular}
	\caption{A list of notations for different models used in this paper. A true model represents the actual (possibly not perfectly known) model of the physical system, while the nominal model represents the model that the controller uses.  While most controllers assume the true model is known, the robust controllers in this paper use the nominal models and offer robustness guarantees to the uncertainty between the two models.}
	\label{tab:ModelTypes}
\end{table}

In order to analyze the effects of model uncertainty in our controllers, we assume that the vector fields, $f(x), g(x)$ of the real dynamics \eqref{fg_affine}, are unknown. We therefore have to design our controller based on the nominal vector fields $\tilde{f}(x), \tilde{g}(x)$.  Then, the pre-control law \eqref{expControl} get's reformulated as 
\begin{equation}
\label{preControlUncertainty}
u(x) = \tilde u_{ff}(x) + (L_{\tilde{g}}L_{\tilde{f}} y(x))^{-1}\mu,
\end{equation}
with
\begin{equation}
\label{ustarUncertainty}
\tilde u_{ff}(x) := -(L_{\tilde{g}}L_{\tilde{f}} y(x))^{-1} L^2_{\tilde{f}} y(x),
\end{equation}
where we have used the nominal model rather than the unknown real dynamics.

Substituting $ u(x) $ from \eqref{preControlUncertainty} into \eqref{linearied_output_h}, the input-output linearized system then becomes
\begin{equation}
\label{yddotUncertainty}
\ddot y = \mu+\Delta_{1}+\Delta_{2}\mu,
\end{equation}
where $\Delta_1 \in \mathbb{R}^m, \Delta_2 \in \mathbb{R}^{m\times m}$, are given by,
\begin{align}
\label{DeltaUncertainty}
\Delta_{1}&:=L^2_{f}\updateT{y(x)}-L_{g}L_{f} \updateT{y(x)}(L_{\tilde{g}}L_{\tilde{f}} \updateT{y(x)})^{-1}L^2_{\tilde{f}}\updateT{y(x)},  \nonumber\\
\Delta_{2}&:=L_{g}L_{f} \updateT{y(x)}(L_{\tilde{g}}L_{\tilde{f}} \updateT{y(x)})^{-1}-I.
\end{align}
\begin{remark}
	In the definitions of $\Delta_1, \Delta_2$, note that when there is no model uncertainty, i.e., $\tilde{f}=f, \tilde{g}=g$, then $\Delta_1 = \Delta_2 = 0$.
\end{remark}

Using $F$ and $G$ as in \eqref{FGdefns}, the closed-loop system now takes the form
\begin{equation}
\label{EtaClosedLoopUncertainty}
\dot{\eta}=\tilde{\bar f}(\eta)+\tilde{\bar g}(\eta)\mu,
\end{equation}
where
\begin{align}
\tilde{\bar f}(\eta)=F\eta+\begin{bmatrix} O \\ \Delta_1 
\end{bmatrix},
\tilde{\bar g}(\eta)=G+\begin{bmatrix} O \\ \Delta_2 
\end{bmatrix}.
\end{align}
In fact for $\Delta_1 \ne 0$, the closed-loop system does not have an equilibrium, and for $\Delta_2 \ne 0$, the controller could potentially destabilize the system (see \cite{reed1989instability}).
This raises the question of whether it's possible for controllers to account for this model uncertainty, and if so, how do we design such a controller. In particular, the time-derivative of the CLF in \eqref{eq:CLF_time_deriv} becomes more complex and dependent on $\Delta_1, \Delta_2$.

\editQTN{
Again, because our controller can only use the nominal model $\tilde{\bar f}, \tilde{\bar g}$ in \eqref{EtaClosedLoopUncertainty} and not the true model $\bar f, \bar g$ as defined in \eqref{eq:f_bar_g_bar} (see Table \ref{tab:ModelTypes} for different types of models),  the Lie derivatives of $V_{\epsilon}$ \editQTN{with respect to the nominal model $\tilde{\bar f}, \tilde{\bar g}$} will be as follows:
\begin{align}
L_{\tilde {\bar f}}V_{\epsilon}&=\etavar^T (F^T P_{\epsilon} + P_{\epsilon} F) \etavar, \\
L_{\tilde {\bar g}}V_{\epsilon}&=2 \etavar^T P_{\epsilon} G.
\end{align}
}
\editQTN{
Then, the true time-derivative of the CLF defined in \eqref{VUnscaledCoords_eps} is:
\begin{align}
\label{eq:dV_uncertainty}
\dot V_{\epsilon} &= L_{\bar f} V_{\epsilon} + L_{\bar g}V_{\epsilon} \mu \nonumber \\
& = L_{\tilde{\bar f}}V_{\epsilon} 
+ \underbrace{2\eta ^T P_{\epsilon} \begin{bmatrix}\textbf{0}\\\Delta_1\end{bmatrix} }_{\Delta_1^v} 
+ \underbrace{L_{\tilde{\bar g}}V_{\epsilon} \mu}_{\mu_v} 
+ \underbrace{L_{\tilde {\bar g}}V_{\epsilon}\Delta_2 \mu }_{\Delta_2^v \mu_v}\nonumber \\
& =L_{\tilde {\bar f}}V_{\epsilon} + \Delta_1^v + (1 + \Delta_2^v) \mu_v,
\end{align}
where we have defined the following new scalar variables: uncertainty $\Delta_1^v \in \mathbb{R}$, virtual control input $\mu_v \in \mathbb{R}$, and uncertainty $\Delta_2^v\in \mathbb{R}$.
}

\subsection{Robust CLF-QP}
Having discussed the effect of model uncertainty on the control Lyapunov function based controllers, we now develop a robust controller that can guarantee
tracking and stability in the presence of bounded uncertainty.
As we will see, both stability and tracking performance (rate of convergence and errors going to zero) are still retained for all uncertainty within a particular bound.   
For uncertainty that exceeds the specified bound, there is graceful degradation in performance.

\editQTN{
With the presence of model uncertainty, the CLF condition \eqref{Vinfepsilon} now becomes:
\begin{align}
\label{eq:CLF_uncertainty}
\dot V_{\epsilon} (\eta, \Delta_1^v, \Delta_2^v, \mu_v) + \frac{c_3}{\epsilon}V_{\epsilon} \le 0.
\end{align}
In general, we can not satisfy this inequality 
for all possible unknown $\Delta_1^v, \Delta_2^v$.  To address this, we assume the uncertainty is bounded as follows
\begin{align}
\label{assump:BoundUncertainty}
|\Delta_1^v| \le \Delta_{1,max}^v, \qquad |\Delta_2^v| \le \Delta_{2,max}^v.
\end{align}
}
\editQTN{
Under this assumption, we have the following robust CLF condition:
\begin{align}
\label{eq:robust_RES_cond}
&\dot V_{\epsilon} (\eta, \Delta_1^v, \Delta_2^v, \mu_v) + \frac{c_3}{\epsilon}V_{\epsilon} \le 0, \nonumber \\
& \qquad \forall |\Delta_1^v|\le\Delta_{1,max}^v, \quad \forall  |\Delta_2^v|\le\Delta_{2,max}^v \nonumber\\
\Leftrightarrow &\max_{\substack{|\Delta_1^v|\le \Delta_{1,max}^v \\ |\Delta_2|\le \Delta_{2,max}^v }} \dot V_{\epsilon} (\eta, \Delta_1^v, \Delta_2^v, \mu_v) + \frac{c_3}{\epsilon}V_{\epsilon} \le 0.
\end{align}
It means that choosing $\mu$ that satisfies \eqref{eq:robust_RES_cond} implies \eqref{RES condition for eta} for every $\Delta_1^v,\Delta_2^v$ satisfying $|\Delta_1^v|\le\Delta_{1,max}^v, |\Delta_2^v|\le\Delta_{2,max}^v$.
}
\editQTN{
Then, our optimal robust control can be expressed as the following min-max problem:
\HRule
\noindent \textbf{Robust CLF}:
\begin{align}
\label{eq:robust_CLF_min_max}
u^*(x)= & \underset{u,\mu, \mu_v}{\argmin} & & \mu^T \mu\\
& \text{s.t.} & & \max_{\substack{|\Delta_1^v|\le \Delta_{1,max}^v \\ |\Delta_2^v|\le \Delta_{2,max}^v }} \dot V_{\epsilon} (\eta, \Delta_1^v, \Delta_2^v, \mu_v) + \frac{c_3}{\epsilon}V_{\epsilon} \le 0 \nonumber \\
&&& \mu_v= L_{\tilde{\bar{g}}}V_{\epsilon}\mu \nonumber \mytag{Robust CLF}\\
&&& u =\tilde u_{ff}(x) + (L_{\tilde{g}} L_{\tilde f} y(x))^{-1} \mu. \mytag{IO}
\end{align}
\HRule
}
\editQTN{
Note that, one advantage of CLF based control is to convert the problem of driving a vector output $\eta \to 0$ to the problem of driving a scalar $V_{\epsilon}(\eta) \to 0$. Therefore, since $V_{\epsilon}$ is a scalar, we always have $\mu_v, \Delta_{1}^v, \Delta_{2}^v$ in \eqref{eq:dV_uncertainty} being scalars. Then the min-max optimization problem \eqref{eq:robust_CLF_min_max} can be expressed as a quadratic program as follows:
\begin{align}
 &\max_{\substack{|\Delta_1^v|\le \Delta_{1,max}^v \\ |\Delta_2|\le \Delta_{2,max}^v }} \dot V_{\epsilon} (\eta, \Delta_1^v, \Delta_2^v, \mu_v) + \frac{c_3}{\epsilon}V_{\epsilon} \le 0 \nonumber 
 \\ \Leftrightarrow & \max_{\substack{|\Delta_1^v|\le \Delta_{1,max}^v \\ |\Delta_2|\le \Delta_{2,max}^v }} L_{\tilde {\bar f}}V_{\epsilon} + \Delta_1^v + (1 + \Delta_2^v) \mu_v + \frac{c_3}{\epsilon}V_{\epsilon} \le 0,
\end{align}
and the above robust CLF condition holds if
\begin{align}
\begin{cases}
{\Psi}_{0}^{ max}+{\Psi}_{1}^p \mu_v \le 0
\\{\Psi}_{0}^{ max}+{\Psi}_{1}^n \mu_v \le 0
\end{cases},
\end{align}
where
\begin{align}
{\Psi}_{0}^{max}&= L_{\tilde{\bar f}}V\updateT{_{\epsilon}}+\frac{c_3}{\epsilon}V_{\epsilon}+ \Delta_{1,max}^v \nonumber\\
{\Psi}_{1}^p& = 1+\Delta_{2,max}^v,\nonumber\\
{\Psi}_{1}^n& =1-\Delta_{2,max}^v.
\end{align}
Thus the robust optimal CLF in \eqref{eq:robust_CLF_min_max} can be satisfied using the following quadratic program based controller:
%
%
\HRule
\noindent \textbf{Robust CLF-QP}:
\begin{align}
\label{cvxWithUncertainty}
u^* =& \underset{u,\mu,\mu_v}{\argmin} & & \mu^T \mu \\
& \text{s.t.} & & {\Psi}_{0}^{max}(\eta,\Delta_{1,max}^v) + {\Psi}_{1}^p(\Delta_{2,max}^v)~\mu_v \le 0, \nonumber\\ & & & {\Psi}_{0}^{max}(\eta,\Delta_{1,max}^v) + {\Psi}_{1}^n(\Delta_{2,max}^v)~\mu_v \le 0, \nonumber\\
&&& \mu_v= L_{\tilde{\bar{g}}}V_{\epsilon}\mu, \nonumber \mytag{Robust CLF}\\
&&& u =\tilde u_{ff}(x) + (L_{\tilde{g}} L_{\tilde f} y(x))^{-1} \mu. \mytag{IO}
\end{align}
\HRule
}
\begin{remark} \label{remark:local-feasibility}
	Note that since $V(\eta)$ is a local CLF for the nonlinear system \eqref{fg_affine} and the fact that $\dot{V}$ depends continuously on $\Delta_1^v, \Delta_2^v$, there always exists $\mu$ satisfying \eqref{eq:robust_RES_cond} for a local region around $\eta = 0, \Delta_1^v = 0, \Delta_2^v = 0$.  Moreover, if $\Delta_{1,max}^v, \Delta_{2,max}^v$ are chosen to be within this region, then the above QP is guaranteed to be feasible.
	\updateT{For QPs with additional constraints we assume point-wise feasibility.}
\end{remark}

\begin{remark}
	Note that $\Delta_{1,max}^v, \Delta_{2,max}^v$ are design parameters for control design\updateT{, and can be easily changed online}.  While we do not present a principled way of choosing these parameters, it must be noted that $\Delta_{1,max}^v, \Delta_{2,max}^v$ can be chosen based on knowledge of the expected uncertainty the controller is to be designed to handle.
\end{remark}

Having developed the robust version of the CLF-QP based controller, we now incorporate constraints into the robust control formulation.  We note that this first formulation is for non-robust constraints, i.e., these constraints are only evaluated on the nominal model available to the controller.  Thus this control is valid and makes sense for only those constraints that are not dependent on the true model.

The incorporation of constraints into the CLF-QP controller required relaxation of the CLF condition (see \cite{TorqueSaturation:GaSrAmGr:Access15}).  Similarly, here we relax the robust CLF condition to obtain,
\HRule
\noindent \textbf{Robust CLF-QP with Constraints}:
\begin{align}
\label{Robust_CLF_QP_constraint}
u^*(x)= &\underset{u,\mu, \mu_v, \delta_1, \delta_2}{\argmin}   \mu^T \mu + p_1 \delta_1 ^2 + p_2 \delta_2^2\\
 \text{s.t.}~~~ &  {\Psi}_{0}^{max}(\eta,\Delta_{1,max}^v) + {\Psi}_{1}^p(\Delta_{2,max}^v)~\mu_v \le \delta_1, \nonumber\\ 
 & {\Psi}_{0}^{max}(\eta,\Delta_{1,max}^v) + {\Psi}_{1}^n(\Delta_{2,max}^v)~\mu_v \le \delta_2, \nonumber\\
& \mu_v= L_{\tilde{\bar{g}}}V_{\epsilon}\mu, \nonumber \mytag{Robust CLF}\\
& A_c(x)u \le b_c(x), \nonumber \mytag{Constraints}\\
& u =u_{ff}(x) + (L_g L_f y(x))^{-1} \mu. \mytag{IO}
\end{align}
\HRule

\begin{remark}
	It's critical to note that the additional constraints in \eqref{Robust_CLF_QP_constraint} are not robust, and that this method is only applicable for constraints that are invariant to model uncertainty.  Constraints such as torque saturation, like in \eqref{CLF-QP-Saturation}, are invariant to model uncertainty since the control inputs are computed directly from the nominal model $\tilde f(x), \tilde g(x)$, and do not depend on the real model $f(x), g(x)$. However, constraints that are invariant to the model uncertainty are not common, and we will have to explicitly address and capture the effect of uncertainty on constraints. We will address this in Section \ref{subsec:robust_constraints}.
\end{remark}

\subsection{Robust CBF}
\label{subsec:robust_CBF}
Following the same approach we use for the robust CLF, we now analyze the effect of model uncertainty on the CBF condition and then present our robust CBF controller.

Similar to what we have seen about the effect of uncertainty on CLFs, we will now see the effect of uncertainty on CBFs. We note that the time-derivative of the Barrier function in \eqref{CBF_candidate_inv} depends on the real model. Therefore we need to enforce the following constraint given by \eqref{eq:Bdot}:
\begin{align}
\label{CBF_cond_real}
\dot B(x,f,g,u)=L_fB(x)+L_gB(x)u\le \frac{\gamma}{B(x)}
\end{align}
where $\dot x=f(x)+g(x)u$ is the real system dynamic.
As seen in the case of control Lyapunov functions and constraints, naively enforcing this barrier constraint using the nominal model results in,
\begin{align}
\label{CBF_cond_nominal_model}
\dot B(x,\tilde f,\tilde g,u)=L_{\tilde{f}}B(x)+L_{\tilde g}B(x)u\le \frac{\gamma}{B(x)}
\end{align}
where $\dot x=\tilde f(x)+\tilde g(x)u$ is the nominal system dynamics known by the controller.
Clearly due to model uncertainty, or the difference between $(f(x),g(x))$ and $(\tilde f(x), \tilde g(x))$, the constraint in \eqref{CBF_cond_nominal_model} is different from the one in \eqref{CBF_cond_real}.  In fact, as analyzed in \cite{CBF:Ames:RobustnessCBF:ADHS15}, this results in violation of the safety-critical constraint established by the Barrier function.

We then define a virtual control input $\mu_b \in \mathbb{R}$ such that the time derivative of the CBF is
\begin{align}
\label{eq:virtual_CBF_input}
\dot B(x, \updateT{\tilde f, \tilde{g}}, u) = L_{\updateT{\tilde{f}}} B(x)+L_{\updateT{\tilde{g}}}B(x)u =: \mu_b.
\end{align}

We call this virtual input-output linearization (VIOL). The CBF now takes the form of a linear system, $\dot B = \mu_b$, and therefore the effect of uncertainty can be easily addressed by using the same approach as with the robust CLF-QP. In particular, similar to what we have done with CLF in Section \ref{sec:adverse_effect}, the effect of model uncertainty on CBF transforms \eqref{CBF_cond_real} into the following form:
\begin{align}
\label{eq:linearized_CBF_Uncertainty}
\dot B(x,\Delta_1^b,\Delta_2^b,\mu_b) &= L_fB(x) + L_gB(x) u \nonumber \\ &=\mu_b+\Delta_1^b+\Delta_2^b\mu_b,
\end{align}
with $\Delta_1^b, \Delta_2^b \in \mathbb{R}$ and $\mu_b$ is defined based on the nominal model $\tilde{f}, \tilde{g}$:
\begin{align}
\mu_b=L_{\tilde{f}}B(x)+L_{\tilde g}B(x)u.
\end{align}


%
The CBF condition \eqref{CBF_cond_real} under model uncertainty becomes
\begin{align}
\dot B (x, \Delta_1^b, \Delta_2^b, \mu_b)-\frac{\gamma}{B(x)}\le 0.
\end{align}
Because we developed the CLF and CBF having the similar form of I-O linearized system, we now have a systematic way to develop the Robust CBF-CLF-QP. Again, we will assume that our model uncertainty is bounded, i.e.,
\begin{align} \label{eq:CBF_uncertainty_bound}
|\Delta_1^b| \le \Delta^b_{1,max}, \qquad
|\Delta_2^b| \le \Delta^b_{2,max}.
\end{align}
\editQTN{
Then, we will have the robust CBF condition based on this assumption as,
\begin{align}
\label{eq:robust_CBF_cond}
\max_{\substack{|\Delta_1^b|\le \Delta^b_{1,max} \\ |\Delta_2^b|\le \Delta^b_{2,max} }} \dot B (x, \Delta_1^b, \Delta_2^b, \mu_b)-\frac{\gamma}{B(x)}\le 0.
\end{align}
}
\editQTN{
Note that choosing $\mu_b$ that satisfies \eqref{eq:robust_CBF_cond} implies $x \in \mathcal{C}$ or $B(x)\ge 0$ for every $\Delta_1^b,\Delta_2^b$ satisfying $|\Delta_1^b|\le\Delta_{1,max}^b, |\Delta_2^b|\le\Delta_{2,max}^b$. Here, $\mathcal{C}$ is as defined in \eqref{eq:safe_set} and $B(x)$ is as defined in \eqref{CBF_candidate_inv}.
}
\editQTN{
Note that the robust CBF condition in \eqref{eq:robust_CBF_cond} is affine in $\mu_b$ and can be expressed as
	\begin{equation} 
	\label{eq:RobustCBF-Psi}
	\max_{\substack{|\Delta_1^b|\le \Delta^b_{1,max} \\ |\Delta_2^b|\le \Delta^b_{2,max} }} 
	\Psi_0^b+ \Psi_1^b \mu_b \le 0
	\end{equation}
	where ``$b$'' refers to the CBF constraint, and
	\begin{align}
	\Psi_0^b(x, \Delta_1^b)&:=\Delta_1^b-\frac{\gamma}{B(x)}, \nonumber\\
	\Psi_1^b(x, \Delta_2^b)&:=1+\Delta_2^b,
	\end{align}
	where the above arises due to the time-derivative of the CBF from \eqref{eq:linearized_CBF_Uncertainty}.
}
\editQTN{	
	Since $\Psi_0^b$, $\Psi_1^b$ are affine with respect to $\Delta_1^b$, $\Delta_2^b$, and with the assumptions on the bounds on the uncertainty in \eqref{eq:CBF_uncertainty_bound}, the robust CBF condition \eqref{eq:RobustCBF-Psi} will hold if the following two inequalities hold
	\begin{align}
	\Psi_{0,max}^b(x)+ \Psi_{1,p}^b(x) \mu_b \le 0, \nonumber\\
	\Psi_{0,max}^b(x)+ \Psi_{1,n}^b(x) \mu_b \le 0.
	\end{align}
	where
	\begin{align}
	\Psi_{0,max}^b&:=\Delta_{1,max}^b-\frac{\gamma}{B(x)},\\
	\Psi_{1,p}^b&:=1 + \Delta_{2,max}^b,\\
	\Psi_{1,n}^b&:=1 - \Delta_{2,max}^b,
	\end{align}
}	
	We then can incorporate the above robust CBF conditions into the robust CLF-QP \eqref{Robust_CLF_QP_constraint} resulting in a quadratic program.
\updateT{\begin{remark}
While we use reciprocal control barrier function \eqref{CBF_candidate_inv} in this paper, our proposed approach on robust CBF does not depend on the choice of barrier function. To be more specific, the virtual control input $\mu_b$ \eqref{eq:virtual_CBF_input} and the effect of model uncertainty on CBF \eqref{eq:linearized_CBF_Uncertainty} are defined for a general nonlinear function $B(x)$.
\end{remark}}	

Having robustified CLFs and CBFs, we will now apply this to obtain robust constraints.

\subsection{Robust Constraints}
\label{subsec:robust_constraints}
As developed in Section \ref{sec:CLF_CBF_revisited}, the input and state constraints can be expressed as \eqref{eq:constraints_u}.  These constraints depend on the model explicitly and are constraints on the real system dynamics.  We can thus rewrite the constraints showing explicit model dependence as,
\begin{align} \label{eq:constraint_true_model}
A_c(x,f,g) u \le b_c(x,f,g).
\end{align}
If a controller naively enforces these constraints using the nominal model available to the controller, the controller will enforce the constraint
\begin{align}
\label{input_constraint_nominal_model}
A_c(x,\tilde{f},\tilde{g}) u \le b_c(x,\tilde{f},\tilde{g}).
\end{align}
Clearly, the constraint in \eqref{input_constraint_nominal_model} is different from the desired constraint we want to enforce on the real model \eqref{eq:constraint_true_model}.  For instance, to enforce a contact force constraint, if the controller computes and enforces the contact force constraint using the nominal model, there is absolutely no guarantee that the actual contact force on the real system satisfies the constraint.

\begin{remark}
	As noted earlier, certain constraints do not depend on the model at all.  In such cases, model uncertainty does not affect the constraint.  One example of such a constraint is a pure input constraint, such as $u(x) \le u_{max}$.  Expressing this constraint in the form of \eqref{eq:constraint_true_model} results in $A_c = I, b_c = u_{max}$, which clearly is not dependent on the model.
\end{remark}

To robustify the ``constraints'', we can use a similar method as we did for the control barrier functions.  We start by  reformulating the constraints \eqref{eq:constraint_true_model} by using Virtual Input-Output Linearization (VIOL) for to obtain,
\begin{align}
\label{mu_c_VIOL}
	A_{c,i}(x,f,g)u-b_{c,i}(x,f,g)=:\mu_{c,i},
\end{align}
and then enforce
\begin{equation}
\label{mu_c_constraint}
	\mu_{c,i} \le 0,
\end{equation}
where the index $i$ indicates the $i^{th}$ constraints in \eqref{eq:constraint_true_model}.

With model uncertainty, we now have,
\begin{align} \label{eq:constraints-mu_c}
A_{c,i}(x,f,g)u-b_{c,i}(x,f,g)=\mu_{c,i} + \Delta_1^{c,i} + \Delta_2^{c,i} \mu_{c,i} \nonumber \\
=:\bar{\mu}_{c,i}(\mu_{c,i}, \Delta_1^{c,i}, \Delta_2^{c,i}),
\end{align}
with $\Delta_1^{c,i}, \Delta_2^{c,i} \in \mathbb{R}$ and the virtual input $\mu_{c,i}$ is now defined based on the nominal model $\tilde f, \tilde g$:
\begin{align}
\mu_{c,i}=A_{c,i}(x,\tilde f,\tilde g)u-b_{c,i}(x,\tilde f, \tilde g).
\end{align}
The robust constraint now becomes
\begin{equation}
	\bar{\mu}_{c,i}(\mu_{c,i}, \Delta_1^{c,i}, \Delta_2^{c,i}) \le 0.
\end{equation}

Once again, we make the assumption on bounded uncertainty,
\begin{align}
|\Delta_1^{c,i}| \le \Delta^{c,i}_{1,max}, \qquad
|\Delta_2^{c,i}| \le \Delta^{c,i}_{2,max},
\end{align}
such that the robust constraint condition becomes,
\begin{align}
\label{eq:robust_constraint_cond}
\max_{\substack{|\Delta_1^{c,i}|\le \Delta^{c,i}_{1,max} \\ |\Delta_2^{c,i}|\le \Delta^{c,i}_{2,max} }} \bar{\mu}_{c,i}(\mu_{c,i}, \Delta_1^{c,i}, \Delta_2^{c,i}) \le 0.
\end{align}

Similar to robust CLF and robust CBF, the above robust constraints condition is affine in $\mu_{c,i}$ and can be expressed as
\begin{equation} \label{eq:Robust-constraints-Psi}
\max_{\substack{|\Delta_1^{c,i}|\le \Delta^{c,i}_{1,max} \\ |\Delta_2^{c,i}|\le \Delta^{c,i}_{2,max} }} 
\Psi_0^{c,i}+ \Psi_1^{c,i} \mu_{c,i} \le 0
\end{equation}
where ``$(c,i)$'' refers to the $i^{th}$ constraint, and
\begin{align}
\Psi_0^{c,i}(x, \Delta_1^{c,i})&:=\Delta_1^{c,i}, \nonumber\\
\Psi_1^{c,i}(x, \Delta_2^{c,i})&:=1+\Delta_2^{c,i},
\end{align}
where the above arises due to the form of the constraints in \eqref{eq:constraints-mu_c}.  The same procedure 
 \updateT{used} for the robust CBF in the previous section can be 
 \updateT{applied} for the robust constraints as well to show how the max inequality gets converted to a set of linear inequalities 
 \updateT{that are enforced by a quadratic program.}

\subsection{Robust CBF-CLF-QP with Robust Constraints}
	We finally can unify the robust CLF for stability, robust CBF for safety enforcement, and the robust constraints  under model uncertainty to obtain the following unified robust controller.
\HRule
\noindent \textbf{Robust CBF-CLF-QP with Robust Constraints}:
\begin{align} \label{robust_CBF_CLF_QP_robust_constraint}
& u^*(x)=\\	
& \underset{u, \mu, \mu_v, \mu_b, \mu_c, \delta}{\argmin} & & \mu^T \mu + p\delta^2 \nonumber \\
	& \text{s.t.} & & \max_{\substack{|\Delta_1^v|\le \Delta_{1,max}^v \\ |\Delta_2^v|\le \Delta_{2,max}^v }} \dot V_{\epsilon} (\eta, \Delta_1^v, \Delta_2^v, \mu_v) + \frac{c_3}{\epsilon}V_{\epsilon} \le \delta, \nonumber \\
	&&& L_{\tilde{\bar g}} V_{\epsilon}\mu \mytag{Robust CLF}=\mu_v,\\
	&&&\max_{\substack{|\Delta_1^b|\le \Delta^b_{1,max} \\ |\Delta_2^b|\le \Delta^b_{2,max} }} \dot B (x, \Delta_1^b, \Delta_2^b, \mu_b)-\frac{\gamma}{B(x)}\le 0,\nonumber\\
	&&& L_{\tilde{f}}B(x)+L_{\tilde{g}}B(x)u=\mu_b, \mytag{Robust CBF} \nonumber\\
	&&&\max_{\substack{|\Delta_1^c|\le \Delta^c_{1,max} \\ |\Delta_2^c|\le \Delta^c_{2,max} }} \bar{\mu}_c(\mu_c, \Delta_1^c, \Delta_2^c) \le 0,\nonumber\\
	&&& A_c(x)\mu-b_c(x)=\mu_c, \mytag{Robust Constraints} \nonumber\\
		&&& u =\tilde u_{ff}(x) + (L_{\tilde{g}} L_{\tilde{f}} y(x))^{-1} \mu. \mytag{IO}
\end{align}
\HRule


Having presented our proposed optimal robust controller that can address stability and strictly enforce constraints under model uncertainty, we now validate our controller in simulations on \updateT{a} bipedal robot and experiments on \updateT{a} spring cart. 

\section{Simulation and Experimental Validation}
\label{sec:simulation}

\subsection{RABBIT Bipedal Robot}

To demonstrate the effectiveness of the proposed robust CBF-CLF-QP controller, we will conduct numerical simulations on the model of RABBIT, a planar five-link bipedal robot. Further description of RABBIT and the associated mathematical model can be found in \cite{CHABAOPLWECAGR02,WEBUGR04}. 


We consider model uncertainty in bipedal robotic walking by adding an unknown heavy load to the torso of the RABBIT robot to validate the performance of our proposed robust controllers. We will also require enforcement of contact force constraints (state constraints) and foot-step location constraints (safety constraints) in the presence of the model uncertainty.

\editQTN{In the following simulations, we run an offline optimization process to generate a walking gait for the nominal system.  This results in a set of outputs (virtual constraints) that need to be regulated to zero by the controller.  Although the offline optimization is on the nominal system, as we will see, the robust controller is able to guarantee enforcement of the virtual constraints on the true model while subject to input constraints, contact force constraints, safety constraints, and model uncertainty.}

\begin{figure*}
	\centering
	\subfloat
	{\resizebox{0.06\linewidth}{!}{\includegraphics{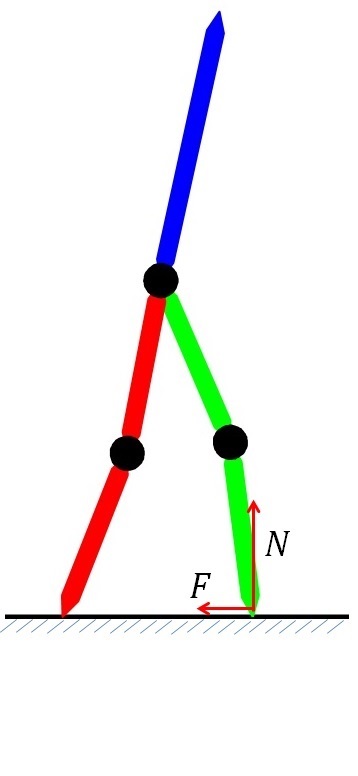}}}
	\subfloat
	{\resizebox{0.22\linewidth}{!}{\includegraphics{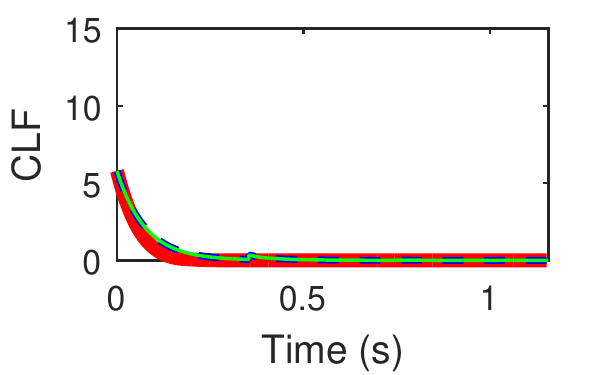}}}
	\subfloat
	{\resizebox{0.22\linewidth}{!}{\includegraphics{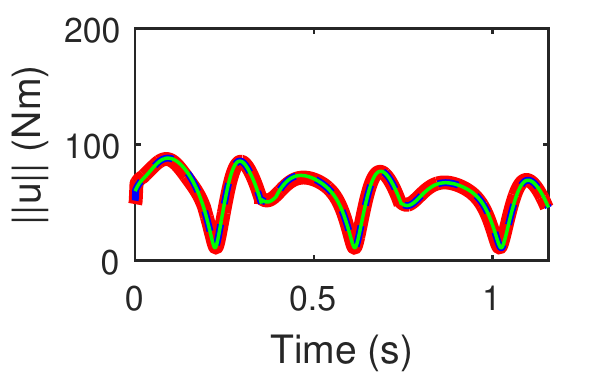}}}
	\subfloat
	{\resizebox{0.22\linewidth}{!}{\includegraphics{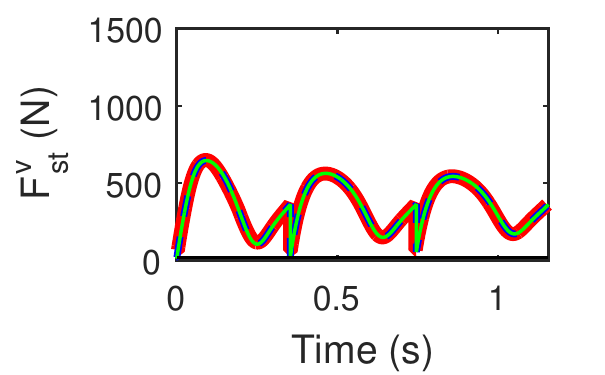}}}
	\subfloat
	{\resizebox{0.22\linewidth}{!}{\includegraphics{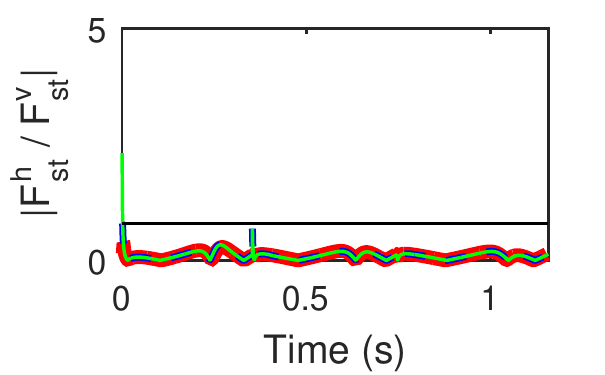}}}
	
	Case 1: $m_{load}=0 ~[kg]$
	
	{\resizebox{0.06\linewidth}{!}{\includegraphics{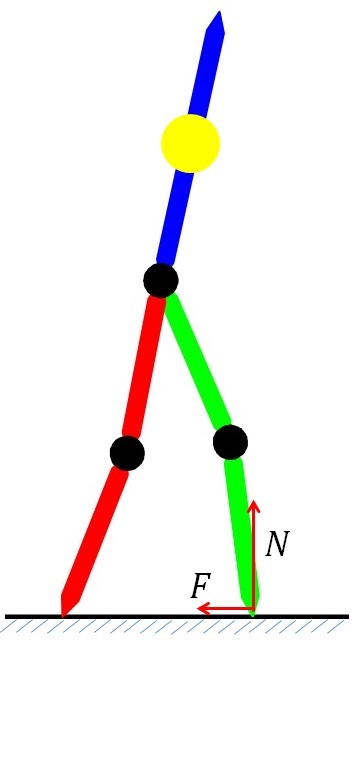}}}
	\subfloat
	{\resizebox{0.22\linewidth}{!}{\includegraphics{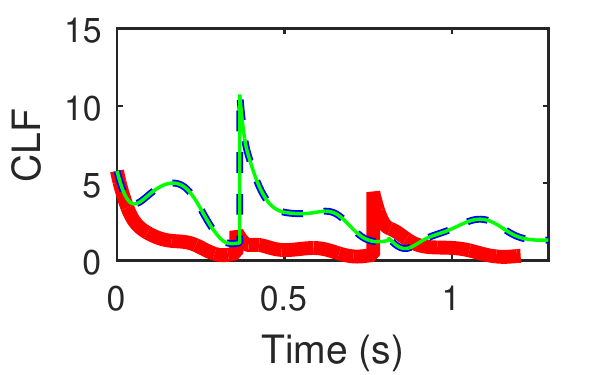}}}
	\subfloat
	{\resizebox{0.22\linewidth}{!}{\includegraphics{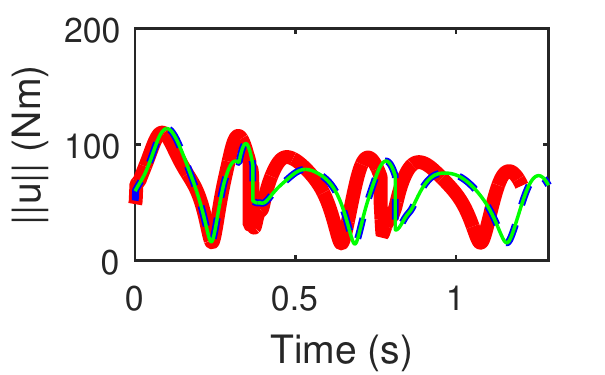}}}
	\subfloat
	{\resizebox{0.22\linewidth}{!}{\includegraphics{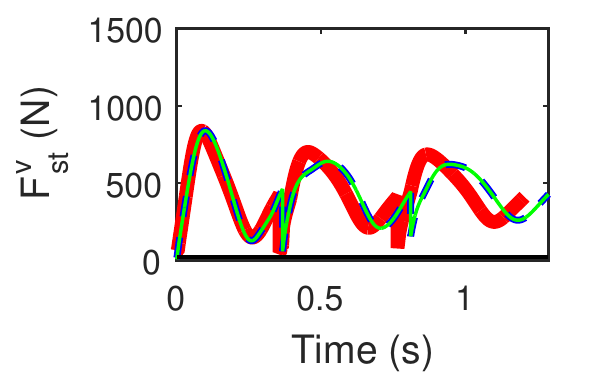}}}
	\subfloat
	{\resizebox{0.22\linewidth}{!}{\includegraphics{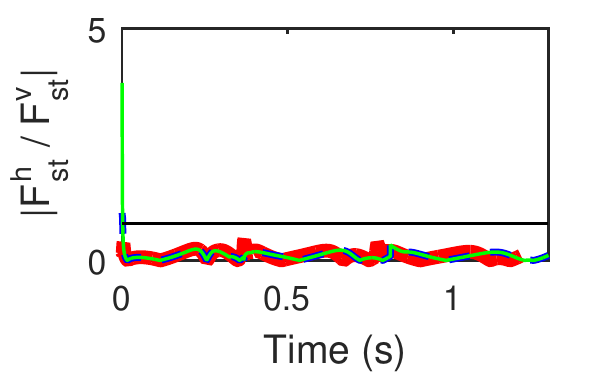}}}
	
	Case 2: $m_{load}=5 ~[kg]$
	
	{\resizebox{0.06\linewidth}{!}{\includegraphics{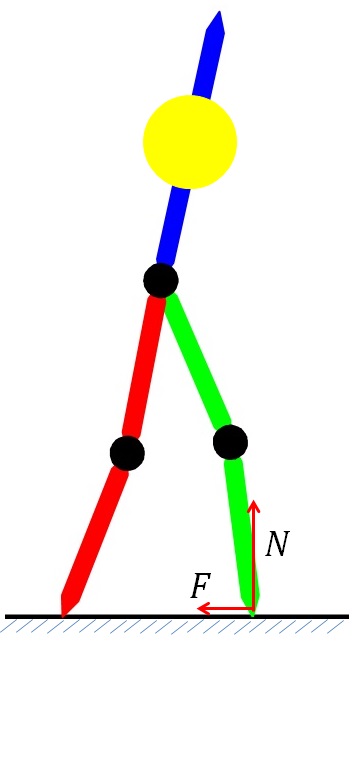}}}
	\subfloat
	{\resizebox{0.22\linewidth}{!}{\includegraphics{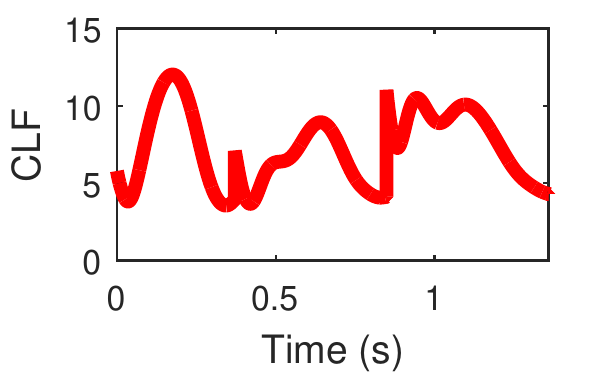}}}
	\subfloat
	{\resizebox{0.22\linewidth}{!}{\includegraphics{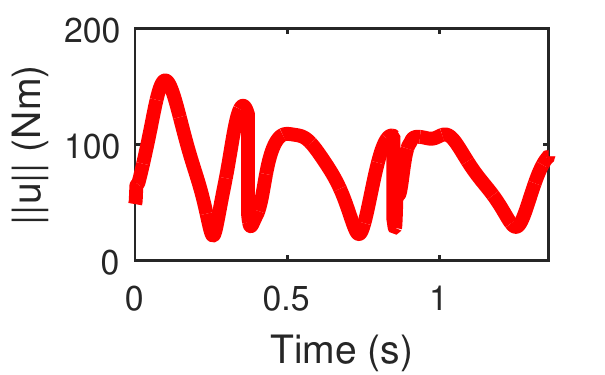}}}
	\subfloat
	{\resizebox{0.22\linewidth}{!}{\includegraphics{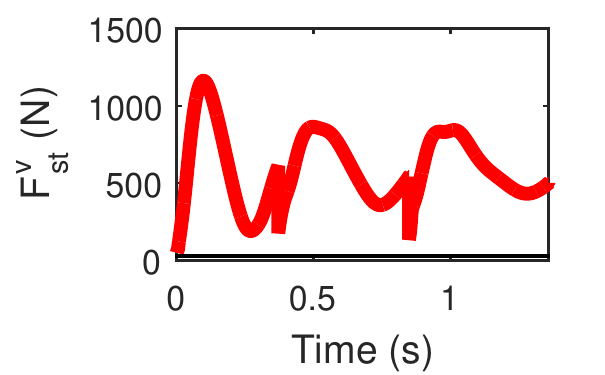}}}
	\subfloat
	{\resizebox{0.22\linewidth}{!}{\includegraphics{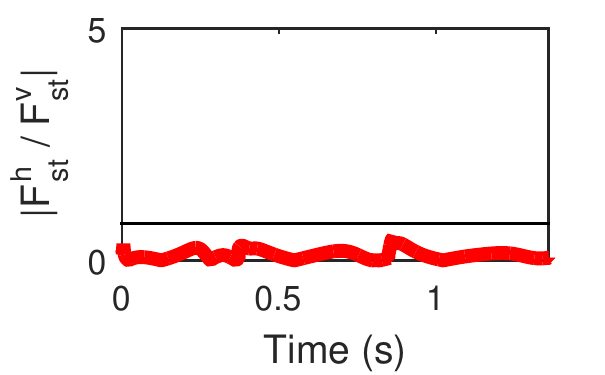}}}
	
	Case 3: $m_{load}=15 ~[kg]$	
	
	\subfloat
	{\resizebox{0.8\linewidth}{!}{\includegraphics[trim={0cm 5.7cm 0cm 0.6cm},clip]{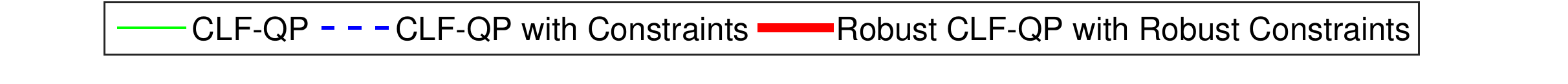}}}
	\caption{
		%
		\updateT{Three steps of walking of a b}ipedal robot \updateT{while} carrying \updateT{an} unknown load \updateT{with} contact force constraints. \updateT{The figure illustrates the tracking performance directly through the CLF (column 1), norm of the control inputs (column 2), non-negative vertical ground reaction force constraint (column 3) and friction cone constraints of being less than 0.8 (column 4).} Even in the nominal case of NO uncertainty, the CLF-QP controller fails \updateT{(the robot slips)} due to \updateT{violation} of \updateT{the} contact force constraints \updateT{(as seen in the rightmost figure of Case 1)}. The CLF-QP with Constraints (Contact Force Constraints) works well with perfect model but fails with only 5 kg of load \updateT{(as seen in the violation of the friction constraints in the rightmost figure of Case 2)}. The Robust CLF-QP with Robust Constraints maintains both good tracking performance and contact force constraints under up to 15 kg of load (47\% of the robot weight). The other two controllers are unstable in this case.
	}	
	\label{fig:load_friction}
	\vspace{-10pt}
\end{figure*}

\subsubsection{Dynamic Walking of Bipedal Robot while Carrying Unknown Load, subject to Contact Force Constraints}

The problem of contact force constraints is very important for robotic walking, as any violation of contact constraints would cause the robot to slip and fall. We design our nominal walking gait so as to satisfy contact constraints. However, we cannot guarantee this constraint once there is a perturbation from the nominal walking gait. Although the feedback controller (for example CLF-QP), will drive any error back to the periodic gait, however there is no way to enforce the contact force constraint. In our simulation, in addition to model uncertainty, we will introduce a small perturbation in initial configuration of the robot
, resulting in an initial CLF $V_0=\eta_0^T P \eta_0=5.8$ (see Fig.\eqref{fig:load_friction}). We will compare three different controllers,
\begin{align}
\begin{cases}
\text{I: CLF-QP}\\
\text{II: CLF-QP with Constraints (Contact Force Constraints)}\\
\text{III: Robust CLF-QP with Robust Constraints}
\end{cases}
\label{3ctrls_load_friction}
\end{align}

We also enforce input saturation constraints for all three controller in \eqref{3ctrls_load_friction}. However, since this constraint doesn't depend on the system model, a robust version for the constraint is not necessary.

Three simulation cases with different loads carried on the torso of the robot are conducted:
\begin{align}
\begin{cases}
\text{Case 1: $m_{load}= 0$ [kg]}\\
\text{Case 2: $m_{load}= 5$ [kg] (16\%)}\\
\text{Case 3: $m_{load}= 15$ [kg] (47\%)}
\end{cases}
\end{align}

We consider contact force constraints as follows. Let $F_{st}^h$ and $F_{st}^v$ be the horizontal and vertical contact force between the stance foot and the ground, in order to avoid slipping during walking, we will have to guarantee:
\begin{align}
F_{st}^v(x) \ge \delta_N > 0\\
\left| \frac{F_{st}^h(x)}{F_{st}^v(x)}\right|\le k_f
\end{align}
where $\delta_N$ is a positive threshold for the vertical contact force, and $k_f$ is the friction coefficient. In our simulation, we picked $\delta_N=0.1m_{robot}$ and $k_f=0.8$, with $m_{robot}=32 [kg]$ being the weight of the robot.

As we can see from Fig.\ref{fig:load_friction}, although we just generate a small initial perturbation, the controller I (CLF-QP) without considering contact force constraints still violated the friction constraint with $|F/N|_{max} \simeq 2.4$, while the controller II (CLF-QP with constraints) can handle this case well. However, with a small model uncertainty (adding $m_{load}=5[kg]$ to the torso), the controller B fails with the friction coefficient $|F/N|_{max} \simeq 1.1$.  Interestingly, in this case the robust CLF-QP with robust contact force constraints controller not only guarantees a very good friction constraints with $|F/N|_{max} \simeq 0.3$, but also has better tracking performance. With $m_{load}=15 [kg]$, while the two controllers I (CLF-QP) and II (CLF-QP with constraints) become unstable and fail in the first walking step, the controller III (Robust CLF-QP with robust constraints) still works well with $|k|_{max} \simeq 0.4$. Especially, we can notice from the figures of $\|u\|$ that the proposed robust CLF-QP with robust constraints has a much better performance in both cases, its range of control inputs is nearly the same with those of the rest two controllers.
In summary, we can conclude that the proposed robust QP offers a novel method that can increase the robustness of both stability and constraints while using the same range of control inputs with prior controllers. These properties will be further strengthened in the next interesting application for bipedal robotic walking with safety-critical constraints.

\subsubsection{Dynamic Walking of Bipedal Robot while Carrying Unknown Load, subject to Contact Force Constraints and Foot-Step Location Constraints}
For validating our proposed controller, we will also test with the problem of footstep location constraints when the robot carries an unknown load on the torso. The control methodology for this problem with perfect model can be found in \cite{CBF:Quan:Footstep:ADHS15}. We will run 100 simulations. For each simulation, the unknown load was chosen randomly between 5-15 kg, the desired footstep locations for 10 steps were chosen randomly between 0.35-0.55 m (the nominal walking gait has a step length of 0.45 m). Because the CLF-QP cannot handle footstep location constraints, the four following controllers will be compared:

\begin{align}
\begin{cases}
\text{I :CBF-CLF-QP (Foot-Step Placement)}\\
\text{II: CBF-CLF-QP with Constraints (Friction Constraints)}\\
\text{III: Robust CBF-CLF-QP}\\
\text{IV: Robust CBF-CLF-QP with Robust Constraints}
\end{cases}
\label{4_ctrls_footstep_friction}
\end{align}


\begin{figure}
	\centering
\resizebox{0.8\linewidth}{!}{\includegraphics{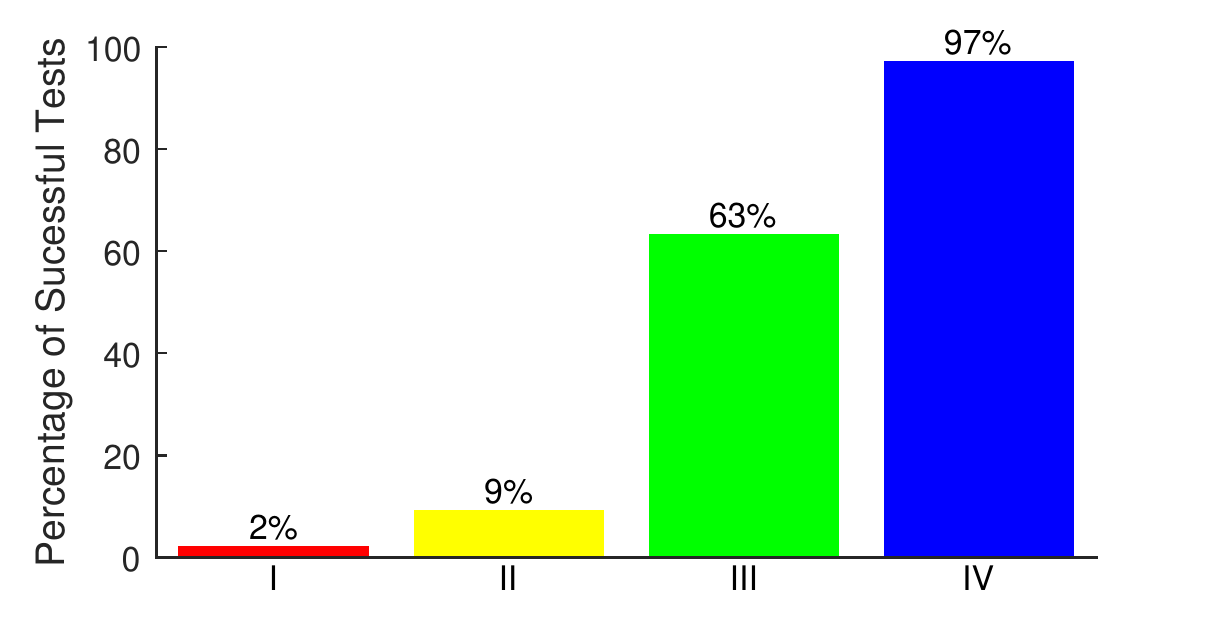}}
	\caption{Dynamic Walking of Bipedal Robot while Carrying Unknown Load, subject to Contact Force Constraints and Foot-Step Location Constraints. 100 random simulations were tested. For each simulation, the unknown load was choose randomly between 5-15 kg, the desired footstep locations for 10 steps were choose randomly between 0.35-0.55 m. The same set of random parameters was tested on the four controllers, where the four controller was specified in \eqref{4_ctrls_footstep_friction}.}	
	\label{fig:load_FSL}
	\vspace{-10pt}
\end{figure}

%
%


\begin{figure}
\centering
		\resizebox{1.0\linewidth}{!}{\includegraphics[trim={1cm 0.5cm 1cm 0.3cm},clip]{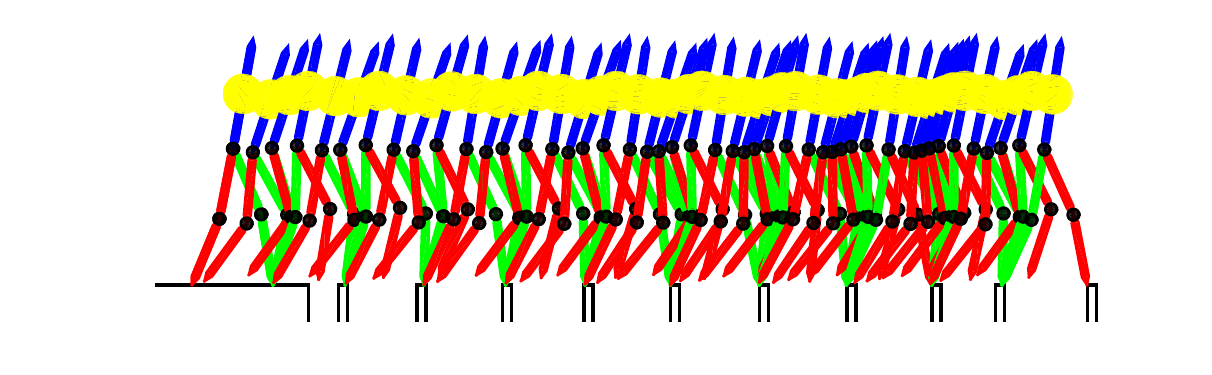}}
	\caption{Dynamic bipdal walking while carrying unknown load, subject to torque saturation constraints (input constraints), contact force constraints (state constraints), and foot-step location constraints (safety constraints). Simulation of the Robust CBF-CLF-QP with Robust Constraints controller for walking over 10 discrete foot holds is shown, subject to model uncertainty of 15 Kg (47 \%). A video of the simulation is available at \url{http://youtu.be/tT0xE1XlyDI}.}
	\label{fig:load_stone_stick_figure}
	\vspace{-10pt}
\end{figure}

\begin{figure}
	\centering
	\resizebox{1.0\linewidth}{!}{\includegraphics{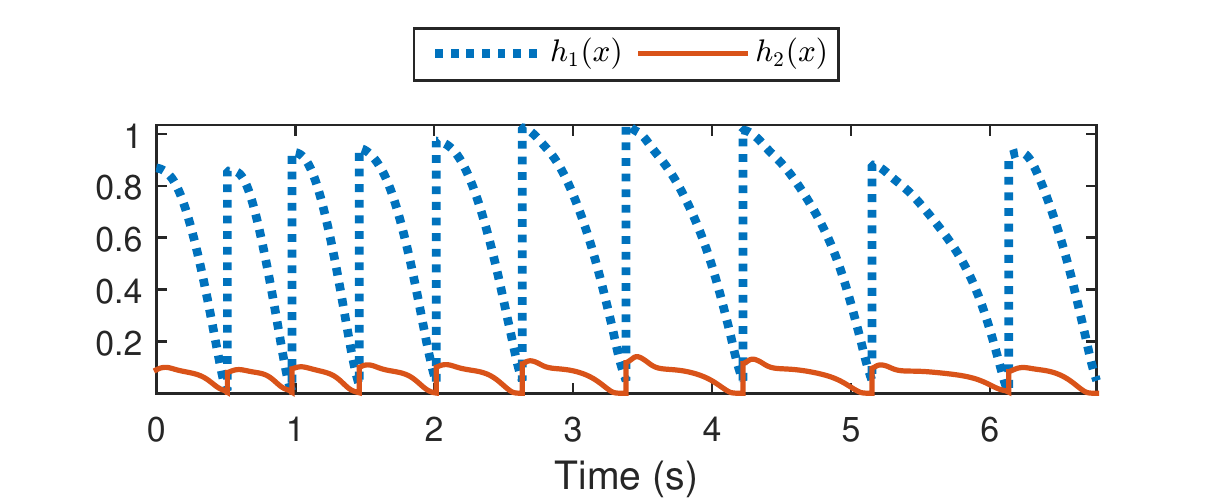}}
	\caption{Dynamic walking of bipedal robot while carrying unknown load of 15 Kg (47 \%). The CBF constraints, $h_1(x) \ge 0$ and $h_2(x) \ge 0$ defined in \cite{CBF:Quan:Footstep:ADHS15}, guarantee precise foot-step locations. The figure depicts data for 10 steps of walking.  As can be clearly seen, the constraints are strictly enforced despite the large model uncertainty.}
	\label{fig:load_stone_friction_CBF_CLF}
	\vspace{-10pt}
\end{figure}

\begin{figure}
	\centering
	\subfloat[][Vertical Contact Force: $N(x) > \delta_N, (\delta_N = 0.1 m g)$.]{\centering
		\resizebox{1.0\linewidth}{!}{\includegraphics{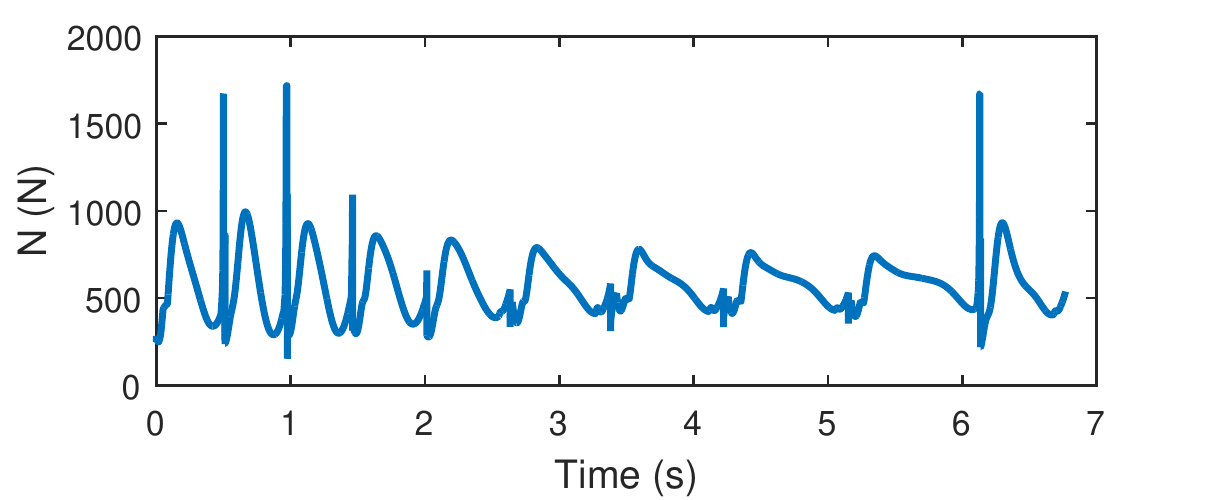}}
	}	\\
	\subfloat[][Friction Constraint: $|F/N|\le k_f$, ($k_f=0.8$)]{\centering
		\resizebox{1.0\linewidth}{!}{\includegraphics{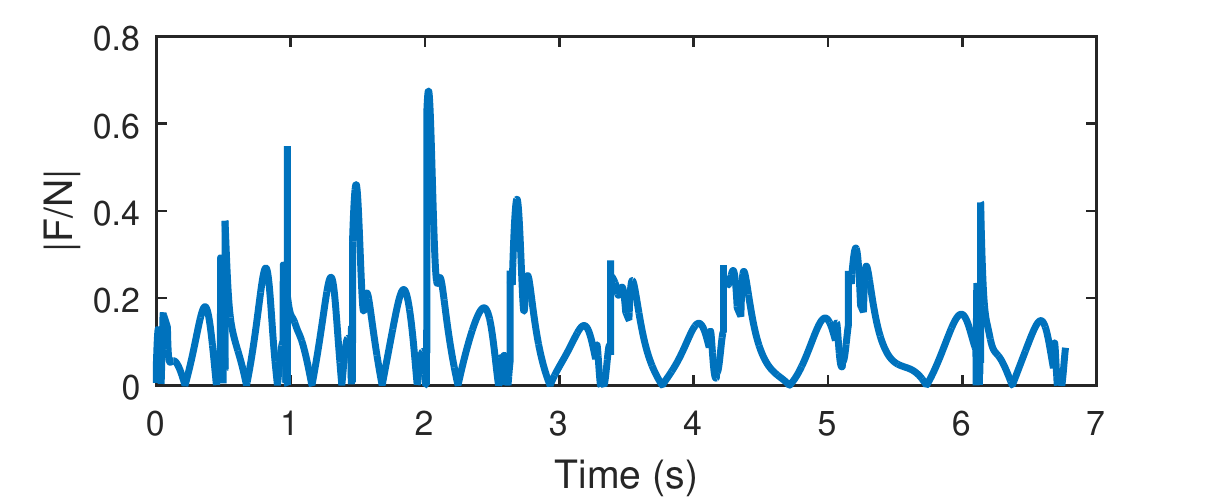}}
	}\\
	\caption{Dynamic walking of bipedal robot while carrying unknown load of 15 Kg (47 \%). (a) Vertical contact force constraint and (b) friction constraint are shown for 10 steps walking.  As is evident, both constraints are strictly enforced despite the large model uncertainty.}	
	\label{fig:load_stone_friction}
	\vspace{-10pt}
\end{figure}

As you can see from Fig.\ref{fig:load_FSL}, the performance of our proposed robust CBF-CLF-QP dominates that of the CBF-CLF-QP (97\% of success in comparision with 2\%). This result not only strengthens the effectiveness of the proposed controller, but it also emphasizes the importance of considering robust control for safety constraints because a small model uncertainty can cause violations of such constraints and thereby no longer guaranteeing safety.

Figures \ref{fig:load_stone_stick_figure}, \ref{fig:load_stone_friction_CBF_CLF}, \ref{fig:load_stone_friction},  illustrate one of the runs where the maximum load of 15 Kg (47\% of robot mass) was considered. Stick figure plot, CBF constraints, vertical contact force, and friction constraint plots are shown. Note that, the simulations were artificially limited to 10 steps, to enable fast execution of 100 runs for each controller. Simulations for larger number of steps were also successful as well, but are not presented here due to space constraints.  

\subsection{Experimental Results on Spring-Cart System}

\begin{figure}[]
	\centering
	\subfloat[Real system]{	\resizebox{0.7\linewidth}{!}{\includegraphics[trim={3cm 5cm 4cm 1cm},clip]{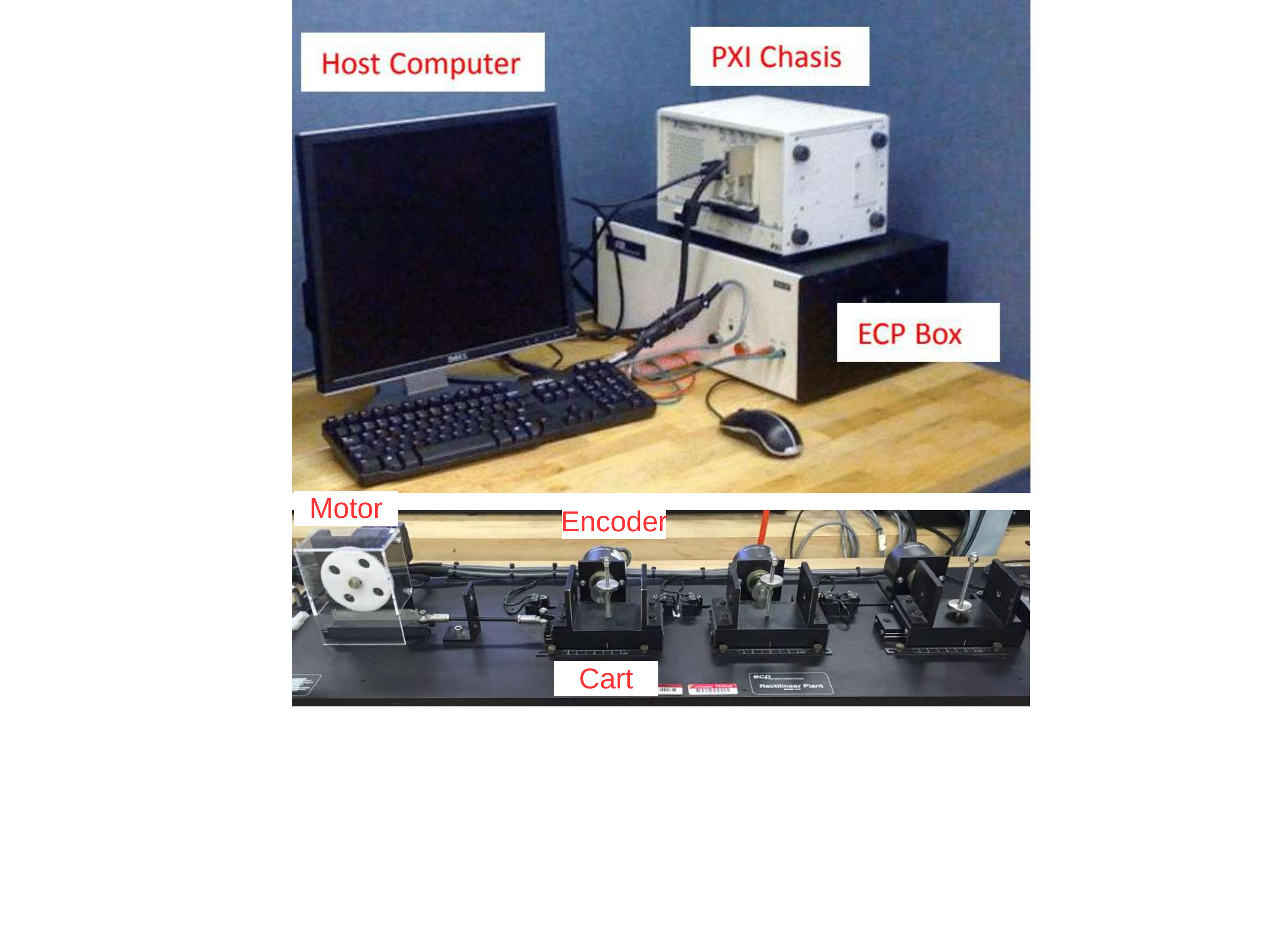}}}
	\\
	\subfloat[System Diagram]{	\resizebox{1.0\linewidth}{!}{\includegraphics[trim={1cm 12cm 1cm 0cm},clip]{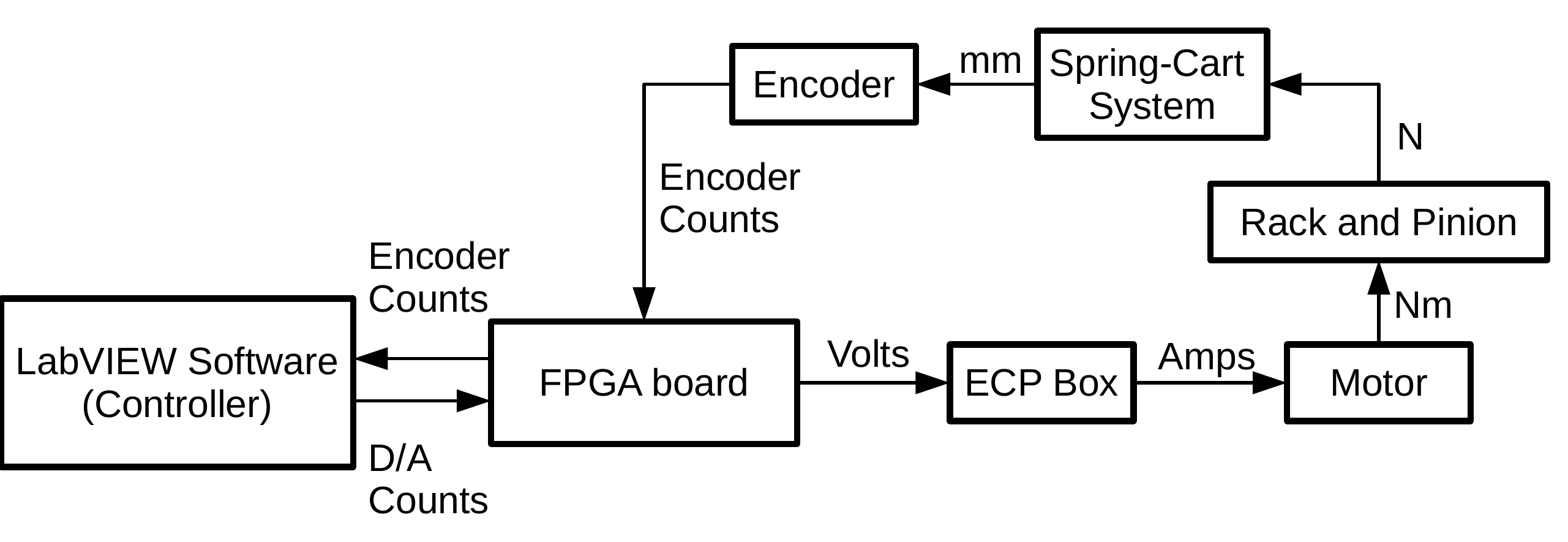}}}	
	\caption{Schematic of experiment setup for the spring-cart system.}
	\label{fig:spring_cart_system_diagram}
	\vspace{-10pt}
\end{figure}

\begin{figure*}
	\setlength{\columnseprule}{0.5pt}
	\begin{multicols}{3}
		\begin{figure}[H]
			\centering
			\subfloat{\centering
				\resizebox{0.4\linewidth}{!}{\includegraphics{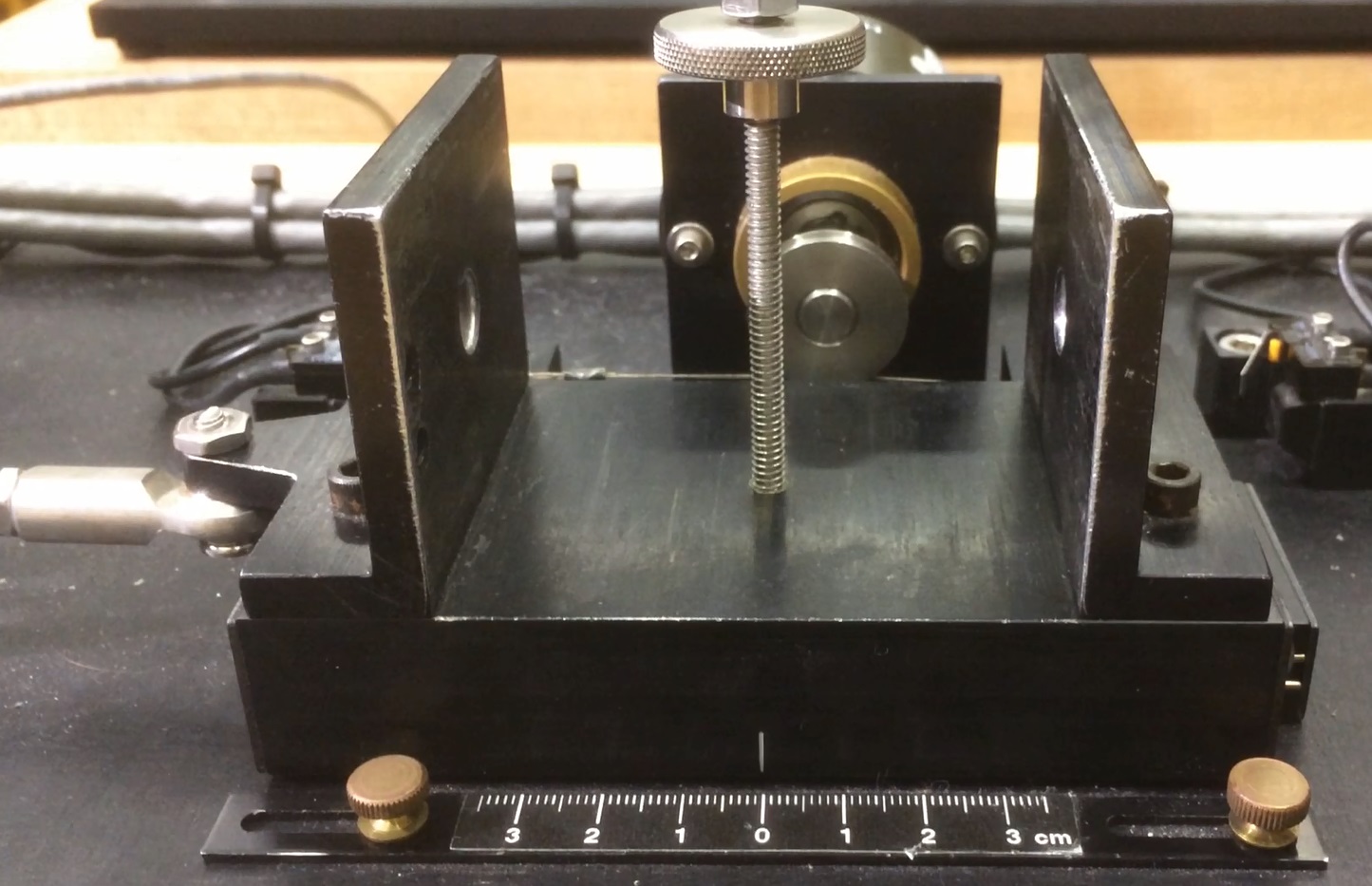}}
			}
			\subfloat{\centering
				\resizebox{0.6\linewidth}{!}{\includegraphics{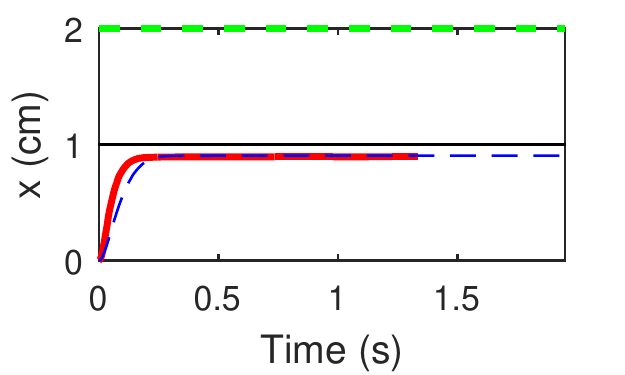}}
			}
			\\\small \linespread{1}\selectfont Case 1: No uncertainty\\
			
			\subfloat{\centering
				\resizebox{0.4\linewidth}{!}{\includegraphics{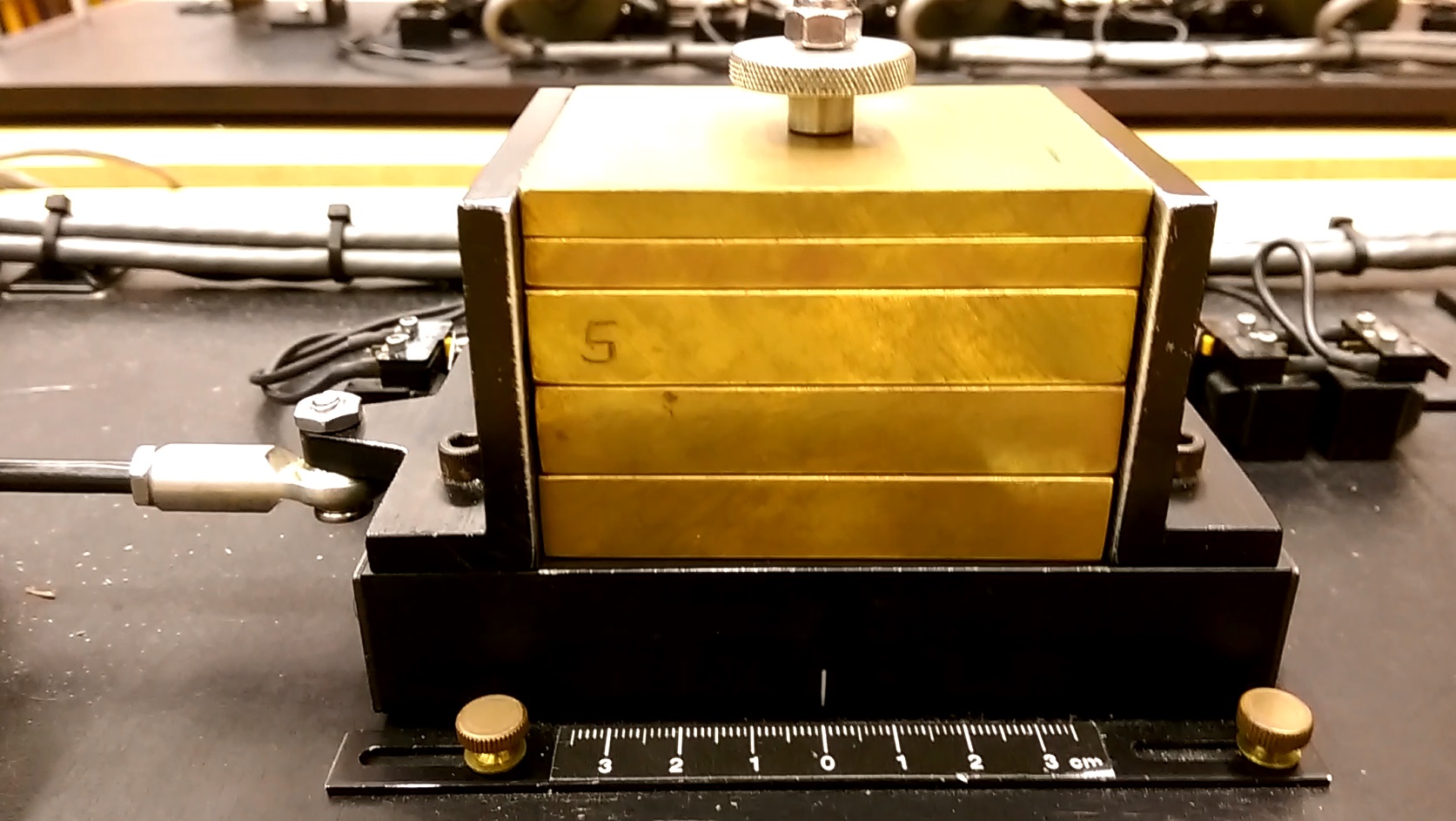}}
			}
			\subfloat{\centering
				\resizebox{0.6\linewidth}{!}{\includegraphics{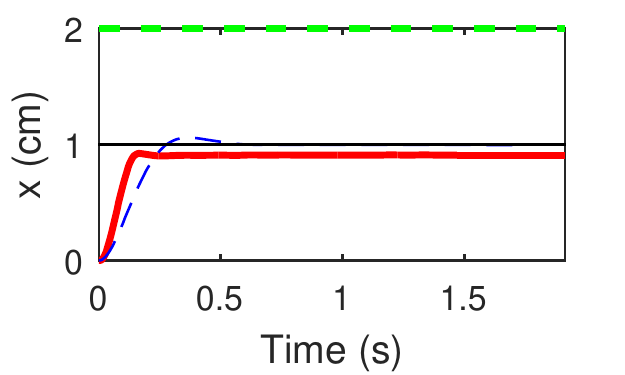}}
			}	
			\\Case 2: Full-loaded cart 1	
		\end{figure}
		
		\columnbreak
		
		\begin{figure}[H]
			\centering
						\subfloat{\centering
				\resizebox{0.4\linewidth}{!}{\includegraphics{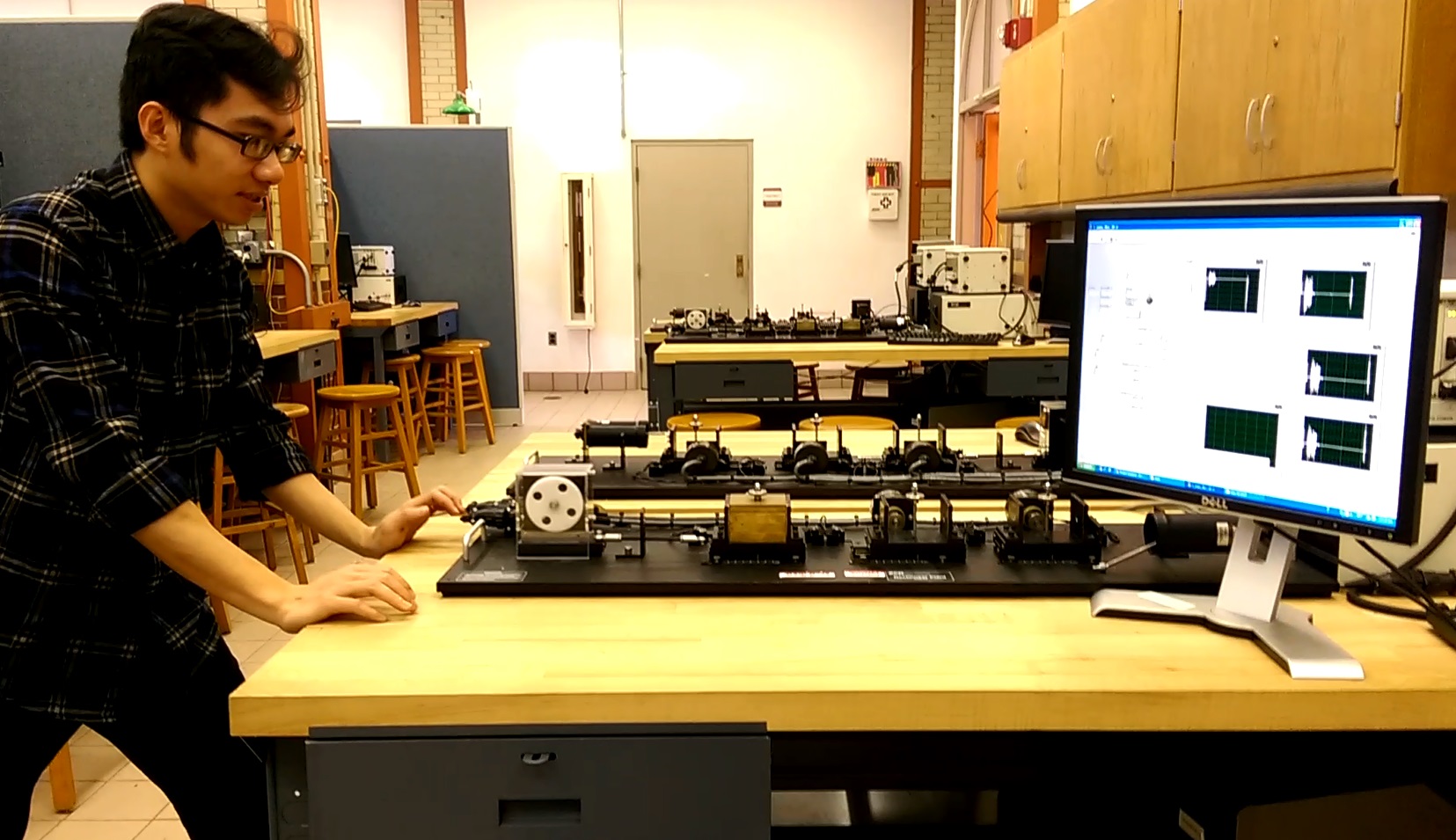}}
			}
			\subfloat{\centering
				\resizebox{0.6\linewidth}{!}{\includegraphics{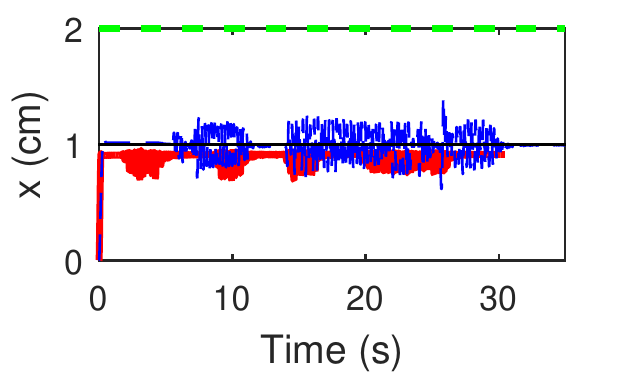}}
			}	
			\\Case 3: Case 2 and shaking the table\\
			
			\subfloat{\centering
				\resizebox{0.4\linewidth}{!}{\includegraphics{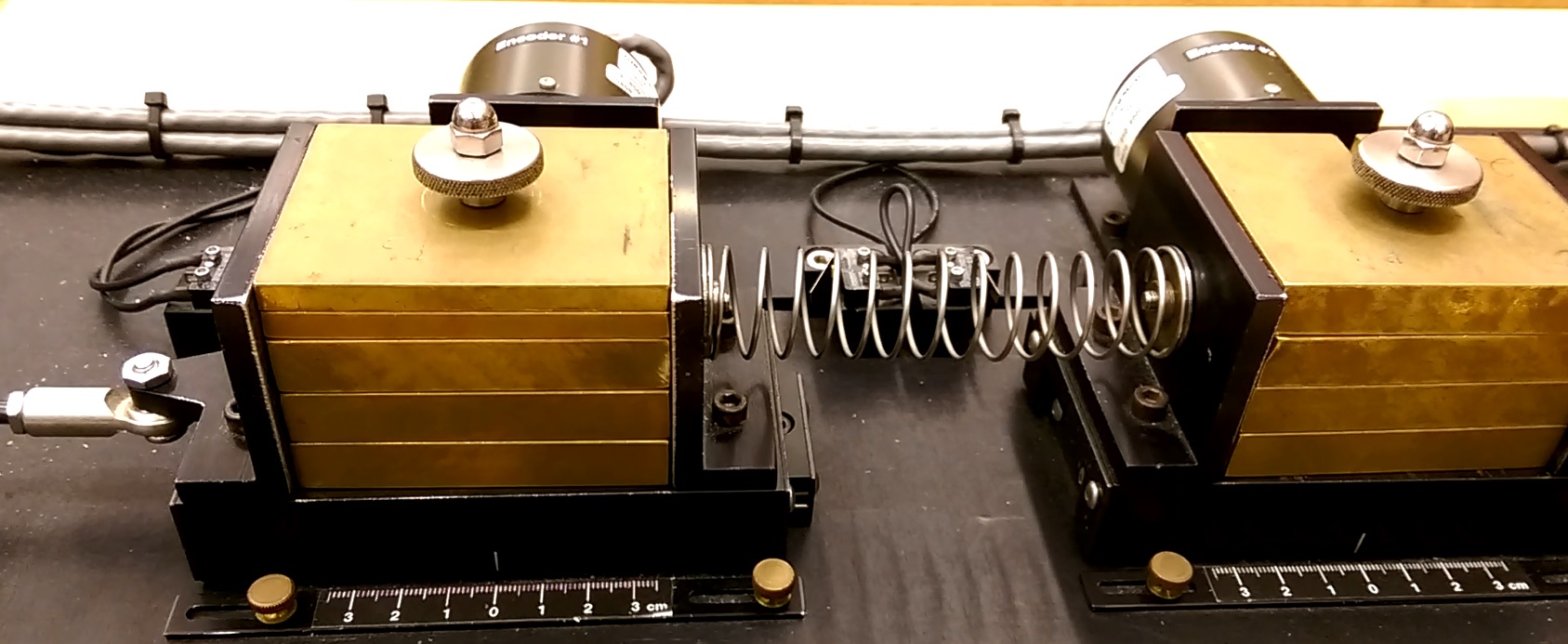}}
			}
			\subfloat{\centering
				\resizebox{0.6\linewidth}{!}{\includegraphics{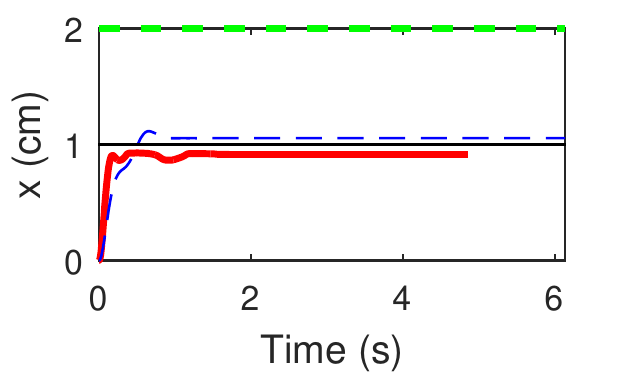}}
			}	
			\\ \small \linespread{1}\selectfont Case 4: Full-loaded cart 1 + spring + full-loaded cart 2
		\end{figure}
	
		\columnbreak

\begin{figure}[H]
	\centering
	\subfloat{\centering
		\resizebox{0.4\linewidth}{!}{\includegraphics{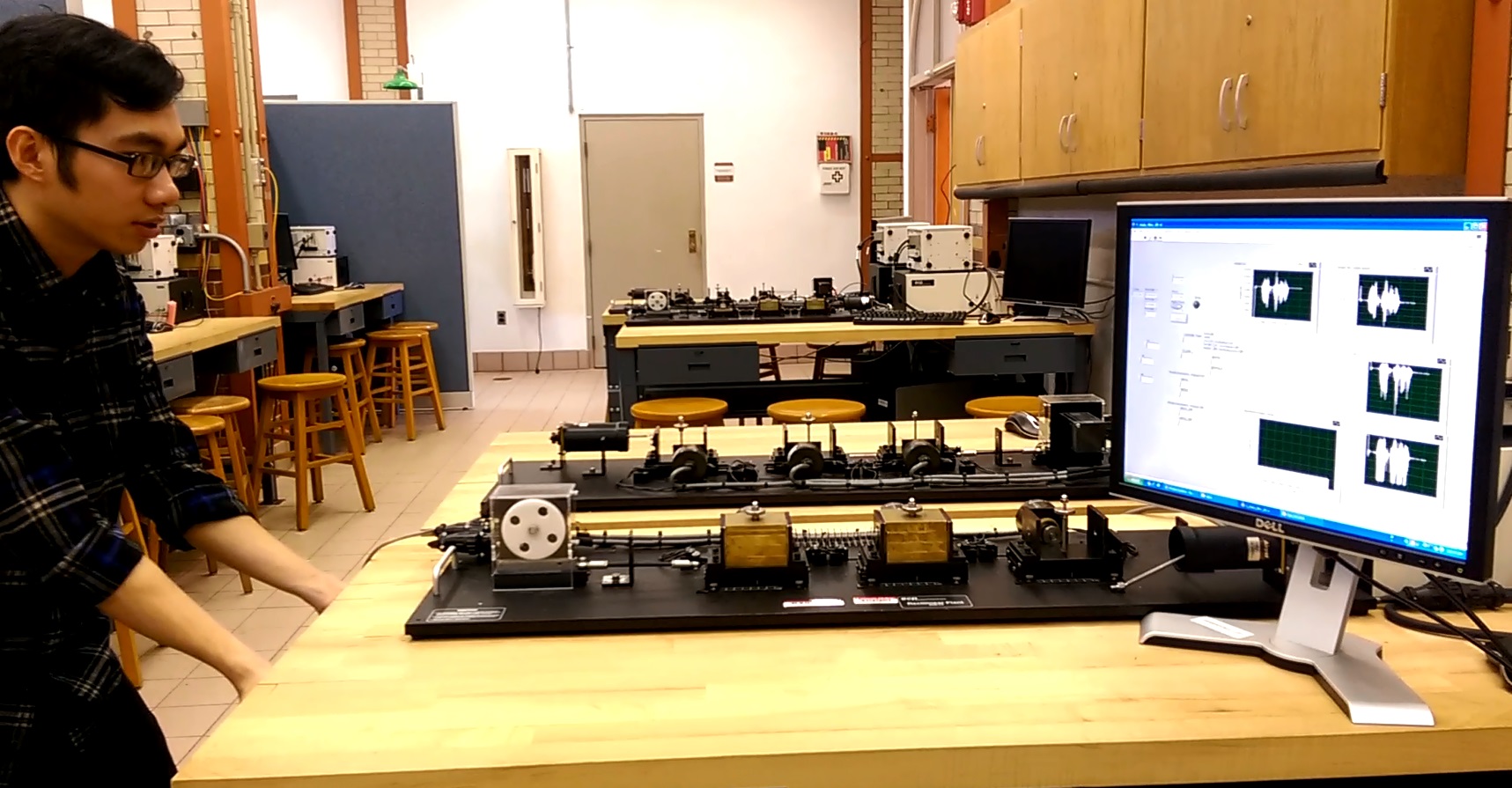}}
	}
	\subfloat{\centering
		\resizebox{0.6\linewidth}{!}{\includegraphics{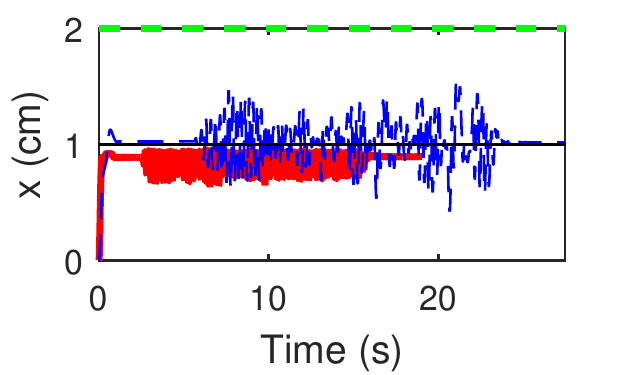}}
	}	
	\\Case 5: Case 4 and shaking the table\\
	
	\subfloat{\centering
		\resizebox{0.4\linewidth}{!}{\includegraphics{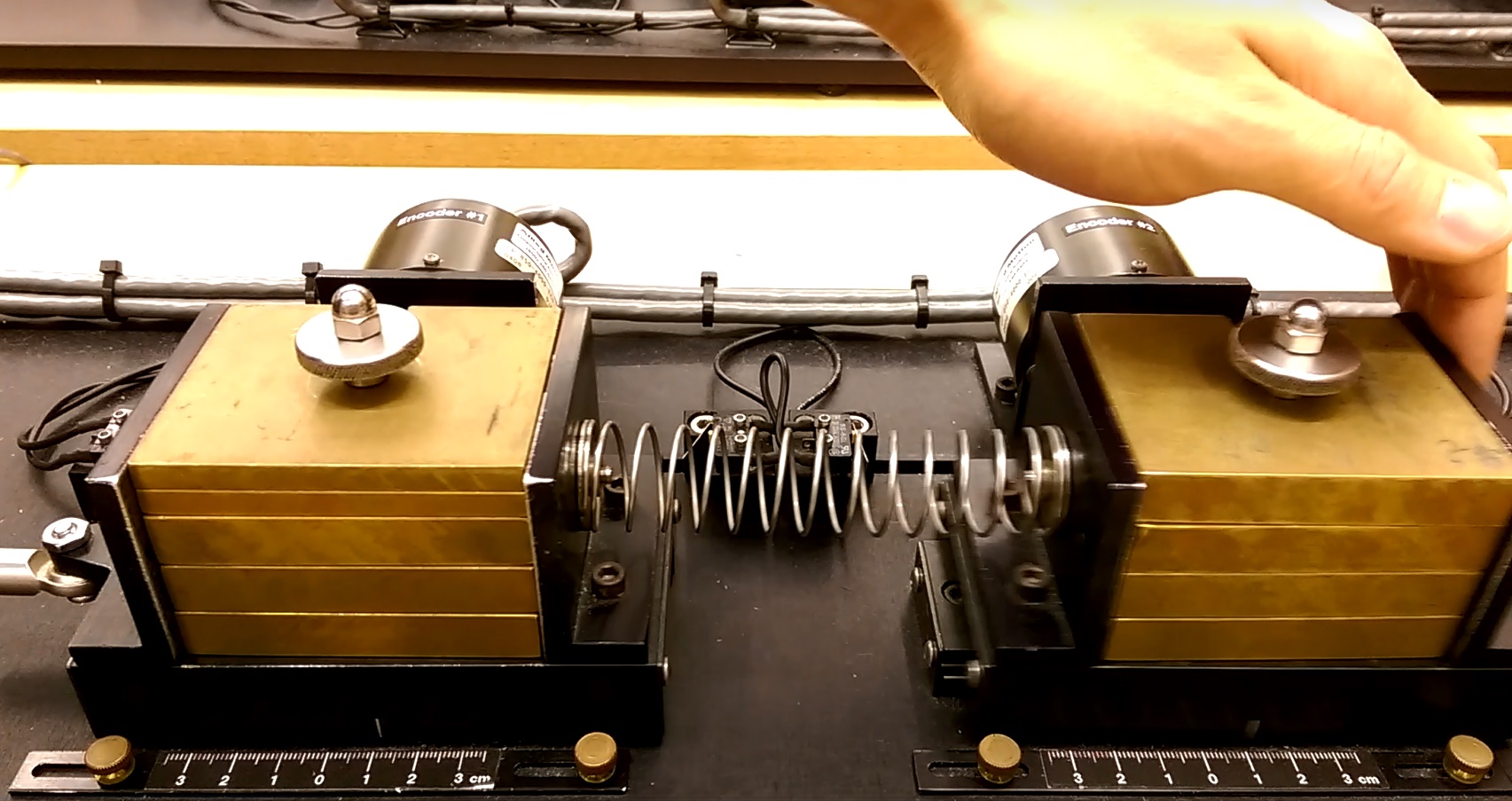}}
	}
	\subfloat{\centering
		\resizebox{0.6\linewidth}{!}{\includegraphics{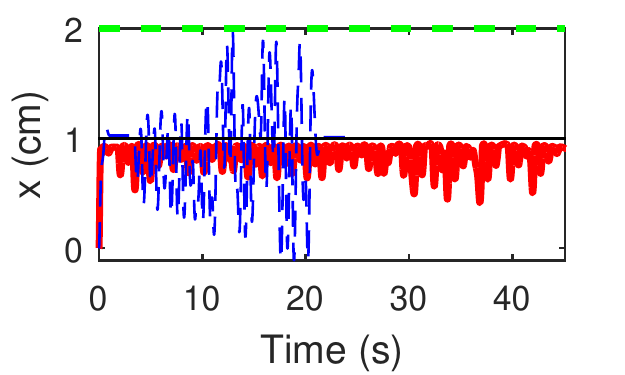}}
	}	
	\\Case 6: Case 4 and perturbing the cart 2.
\end{figure}

	\end{multicols}
	\centering
	\subfloat
	{\resizebox{1\linewidth}{!}{\includegraphics[trim={0cm 4cm 0cm 0.2cm},clip]{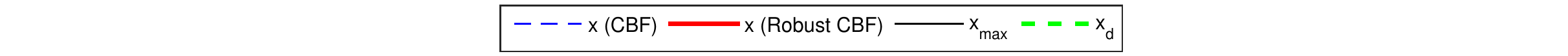}}}
	\caption{Experimental results on the spring-mass system. The goal is to drive the cart to the target location of 2 cm, while enforcing the safety constraint that the cart does not cross 1 cm.  The controller just uses the nominal model as illustrated in Case 1 for all the 6 different cases.  Model uncertainty is introduced for Cases 2 in the form of an added unknown mass.  Additionally for Case 3, a perturbation is introduced in the form of shaking the table.  For Case 4, in addition to the unknown mass, an unkown dynamics is introduced in the form of another cart that is connected through a spring.  Additionally for Cases 5 and 6, perturbations are introduced in the form of manually shaking the table and shaking the second cart respectively.  In all cases, the proposed robust controller still enforces the strict safety-critical constraint and maintains the cart position under 1 cm.  A video of the experiment is available at \url{https://youtu.be/g1UewP4R8L4}.}
	\label{fig:experiment}	
	\vspace{-10pt}
\end{figure*}

\editQTN{Having presented numerical results of our proposed control method for bipedal robots, we next present e}xperimental validation for the method on a rectilinear spring-cart system, \editQTN{as} shown in Fig.~\ref{fig:experiment}. \editQTN{It must be noted that the experiments on a rectilinear spring-cart system are connected to the simulations with a bipedal robot.  Since our simulations consider nonlinear systems with IO linearization controllers and safety-constraints with relative-degree two, our preliminary experiments are thus with linear systems with relative-degree two safety-constraints. Furthermore, with this experimental result, we will perturb the system with different types of model uncertainties (see Fig. \ref{fig:experiment}). It therefore can represent model uncertainty in the IO linearized system discussed in Section \ref{sec:robust_control}.  Future work will consider experiments with bipedal systems}.

For this experiment, our control problem is to track desired set-point ($x \to x_d=2~[cm]$) by using \editQTN{a} CLF, and guarantee state-dependent constraint ($x \le x_{max}=1~[cm]$) by using \editQTN{a} CBF, where $x$ is the position of \editQTN{C}art 1.

\editQTN{The experimental setup of the rectilinear spring-cart system (ECP Model 210) is described in Fig \ref{fig:spring_cart_system_diagram}. 
Our controller is implemented using LabVIEW, wherein we call a custom-generated C++ code that implements a fast QP solver. This QP solver code is autogenerated using CVXGEN \cite{CVXGEN:MaBo2012}.  The controller runs at 100 Hz on a LabVIEW PXI DAQ and outputs the control input to the ECP system.
In particular, the control input is sent to a FPGA  board in the PXI DAQ, which then generates  and outputs	an analog driving voltage (through a 16-bit Digital-to-Analog Converter) to the current amplifier in the ECP system. This amplifier generates the required current to drive the motor which in turn produces a torque. For rectilinear systems this torque is converted to a linear force through a rack and pinion mechanism.
 The motion of the moving cart is  measured  by an  encoder and this information in  encoder counts is  acquired by the FPGA board in the PXI DAQ and sent to our controller via the host LabVIEW software.} 

Fig.\ref{fig:experiment} compares the performance of two controllers CBF-CLF-QP (dotted blue line) and Robust CBF-CLF-QP (red line) under six different cases. Experimental setup for each case and corresponding result are shown in Fig.\ref{fig:experiment}. Note that the two controllers were designed based on the nominal model indicated in case 1 (a single cart) and we generated model uncertainty  from case 2-6 by adding masses, shaking table, adding spring and another cart, etc.

From Fig.\ref{fig:experiment}, we can observe clearly that while in case 1 (without model uncertainty), the two controllers have almost the same performance, from Cases 2-6, our proposed robust CBF-CLF-QP outperformed the nominal CBF-CLF-QP. To be more specific, the robust CBF-CLF-QP controller maintains the constraint ($x \le 1 (cm)$) very well, the nominal CBF-CLF-QP fails in all last 5 cases.

\subsection{Discussion}
The proposed controller has a few shortcomings. 
Since we are solving for the control input under the worst-case (bounded) model uncertainty assumption, the control could be aggressive.  This is a typical drawback of robust controllers.
Moreover, as mentioned in Remark \ref{remark:local-feasibility}, we only have local feasibility of the QP.  In particular, if we increase the bounds of the uncertainty significantly, i.e., large values of $\Delta_1^{max}, \Delta_2^{max}$, the optimization solves for the control input for the worst case, and could potentially lead to infeasibility of the QP. Thus, there is a trade-off between robustness and feasibility of the controller. If we choose a small bound on the model uncertainty, it could lead to poor tracking stability and potential violation of the safety constraint under mild model uncertainty that exceeds the bounds. In contrast, if we assume the bound of model uncertainty being too large the QP could become infeasible. Therefore, in the future, a more formal design of the bounded uncertainty assumption should be explored. 

\section{Conclusion}
\label{sec:conclusion}
We have presented a novel Optimal Robust Control technique that uses control Lyapunov and barrier functions to successfully handle significantly high model uncertainty for both stability, input-based constraints, state-dependent constraints, and safety-critical constraints. We validated our proposed controller on different problems both numerically and experimentally, which show the same property: under model uncertainty, our Robust Control based QP, has much better tracking performance and guarantees desired constraints while other types of QP controllers using Lyapunov and barrier functions not only have large tracking errors but also violate the constraints with even a small model uncertainty. We show numerical validations on dynamically walking of bipedal robots while subject to torque saturation and contact force constraints in the presence of model uncertainty, and dynamically walking with precise foot placements over a terrain of stepping stones while subject to model uncertainty.  We also experimentally validate our controller on a spring-cart system subject to significant model uncertainty and perturbations.
Future directions involve experimental validations on bipedal robots and other dynamic robotic systems.



\appendices
\section{Proof of Theorem \ref{thm:MainRelaxedIneq}}
\label{sec:proof_2_appendix}
In this Appendix, we will present a detailed proof of Theorem \ref{thm:MainRelaxedIneq} about the stability of CLF based controller with relaxed RES-CLF condition for hybrid systems.

A large part of this proof directly follows from results and proofs in \cite{RESCLF:AmGaGrSr:TAC12}, that is used to prove the stability of the hybrid system under the RES-CLF condition. In our case of relaxed CLF, we will state the additional condition under which the proof is still valid. 

Let $\epsilon >0$ be fixed and select a Lipschitz continuous feedback $u_\epsilon$ of the relaxed CLF-QP controller \eqref{eq:RelaxedCLFQP}. From \cite[(56)]{RESCLF:AmGaGrSr:TAC12}, we have $T_{I}^\epsilon(\eta,z)$ is continuous (since it is Lipschitz) and therefore there exists $\delta >0$ and $\Delta T >0$ such that for all $(\eta,z) \in B_{\delta}(0,0) \cap \S$,
\begin{align}
\label{eqn:TstarTIBound}
T^*-\Delta T \le T_{I}^\epsilon(\eta,z) \le T^*+\Delta T,	
\end{align}
where $T^*$ is the period of the orbit $\Orbitz$.

In order to make use of the proof of the exponential stability for the standard RES-CLF controller in \cite{RESCLF:AmGaGrSr:TAC12}, we will present the condition for bounding the system states $\eta(t)$ at the time-to-impact $\TI$  in the following lemma.

\gap
\begin{lemma}
\label{lem:BoundXt}
{\it Let $\Orbitz$ be an exponentially stable periodic orbit of the hybrid zero dynamics $\HS|_Z$ \eqref{eq:hybrid_zero_dyn} transverse to $\S \cap Z$ and the continuous dynamics of ${\mathscr H}$ \eqref{eq:lineaized_sys} controlled by a CLF-QP with relaxed inequality \eqref{eq:RelaxedCLFQP}. Then for each $\Delta T >0$ and $\epsilon > 0$, there exists an $\bar{w}_{\epsilon} \ge 0$ such that, if the solution $u_\epsilon(\eta,z)$ of the CLF-QP with relaxed inequality \eqref{eq:RelaxedCLFQP} satisfies $w_{\epsilon}(\TI) \le \bar{w}_{\epsilon}$, then
\begin{align}
\label{BoundXtX0Lemma1}
\|\eta(t) \| \biggr|_{t=\TI} &\le \sqrt{\frac{c_2}{c_1}} \frac{2e^{-1}}{(T^*-\Delta T)c_3}\|\eta(0) \|.
\end{align}
}
\end{lemma}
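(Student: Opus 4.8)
The plan is to reduce this one-step bound to the continuous-phase estimate already established inside the proof of Theorem~\ref{thm:stability_relaxedCLF_continuous}, and then to pin down the admissible relaxation budget $\bar w_{\epsilon}$ by a single scalar optimization. First I would fix $\epsilon > 0$, fix a Lipschitz selection $u_{\epsilon}$ of the relaxed CLF-QP \eqref{eq:RelaxedCLFQP}, take $\eta(0)$ to be the post-impact state at the start of a step, and let $w_{\epsilon}$ from \eqref{W2Defn} accumulate the relaxation over that single step. The derivation of \eqref{XtX0} uses only the relaxed inequality \eqref{RelaxedInequality} and the sandwich bound \eqref{Vineqepsilon} (plus $V_{\epsilon}>0$ off the orbit) --- it does not use the particular form of the vector fields and does not use that the phase has infinite duration --- so it applies verbatim on the continuous phase $[0,\TI]$ of $\mathscr{H}$, yielding $\|\eta(t)\| \le \sqrt{c_2/c_1}\,\tfrac{1}{\epsilon}\, e^{-\frac{c_3}{2\epsilon}t + \frac{1}{2} w_{\epsilon}(t)}\,\|\eta(0)\|$ on that interval.

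Next I would evaluate this inequality at the time-to-impact $t = \TI$. For initial data on the switching surface close enough to the orbit for \eqref{eqn:TstarTIBound} to apply we have $\TI \ge T^{*} - \Delta T > 0$; since $t \mapsto e^{-\frac{c_3}{2\epsilon}t}$ is decreasing and, by hypothesis, $w_{\epsilon}(\TI) \le \bar w_{\epsilon}$, this gives
\[
\|\eta(t)\|\big|_{t=\TI} \;\le\; \sqrt{\tfrac{c_2}{c_1}}\;\Big(\tfrac{1}{\epsilon}\, e^{-\frac{c_3}{2\epsilon}(T^{*}-\Delta T)}\Big)\, e^{\frac{1}{2} \bar w_{\epsilon}}\,\|\eta(0)\|.
\]

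It then remains only to choose $\bar w_{\epsilon} \ge 0$ so that the parenthesized factor, inflated by $e^{\frac{1}{2}\bar w_{\epsilon}}$, does not exceed $\tfrac{2e^{-1}}{(T^{*}-\Delta T)c_3}$, which is exactly \eqref{BoundXtX0Lemma1}. Setting $\beta := \tfrac{c_3(T^{*}-\Delta T)}{2\epsilon} > 0$, the factor equals $\tfrac{2}{c_3(T^{*}-\Delta T)}\,\beta e^{-\beta}$, and since $\sup_{\beta>0}\beta e^{-\beta} = e^{-1}$, attained at $\beta = 1$ (i.e.\ $\epsilon = \tfrac{c_3(T^{*}-\Delta T)}{2}$), the factor never exceeds $\tfrac{2e^{-1}}{c_3(T^{*}-\Delta T)}$. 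I would then take
\[
\bar w_{\epsilon} := 2\big(\beta - 1 - \ln\beta\big),
\]
which is nonnegative because $\beta - 1 \ge \ln\beta$ for all $\beta > 0$, and for which $e^{\frac{1}{2}\bar w_{\epsilon}}\cdot \beta e^{-\beta} = e^{-1}$ identically; substituting back gives precisely \eqref{BoundXtX0Lemma1}. The argument is mostly bookkeeping of the $\epsilon$-dependence; the one substantive point --- and the step I expect to be the crux --- is recognizing that the quantity $\tfrac{1}{\epsilon} e^{-\frac{c_3}{2\epsilon}(T^{*}-\Delta T)}$ from the standard RES-CLF analysis is maximized over $\epsilon$ with value $\tfrac{2e^{-1}}{c_3(T^{*}-\Delta T)}$, so that the tolerable amount of relaxation $\bar w_{\epsilon}$ is exactly twice the logarithm of the slack between this quantity and its maximum, which is why a strictly useful $\bar w_{\epsilon} \ge 0$ exists for every $\epsilon$.
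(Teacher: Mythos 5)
Your proof is correct and follows essentially the same route as the paper: evaluate the continuous-phase estimate \eqref{XtX0} at $t=\TI$, lower-bound $\TI$ by $T^*-\Delta T$ via \eqref{eqn:TstarTIBound}, and use $\alpha e^{-\alpha}\le e^{-1}$ (the paper's inequality \eqref{ieq: c3 condition}) to define $\bar w_\epsilon$ as the available logarithmic slack. Your closed form $\bar w_\epsilon = 2(\beta-1-\ln\beta)$ with $\beta=\tfrac{c_3(T^*-\Delta T)}{2\epsilon}$ is exactly the paper's maximal choice $\bar w_\epsilon=\tfrac{c_3-\bar c_3^a}{\epsilon}(T^*-\Delta T)$ expressed explicitly rather than through the implicitly defined reduced decay rate $\bar c_3^a$ of \eqref{def c31 bar}.
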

\gap

\begin{proof}
From \eqref{XtX0} and because $w_{\epsilon}(\TI) \le \bar{w}_{\epsilon}$, it implies that 

\begin{align}
\|\eta(t) \| \biggr|_{t=\TI} &\le \sqrt{\frac{c_2}{c_1}} \frac{1}{\epsilon} e^{-\frac{c_3}{2\epsilon}\TI+\frac{1}{2}w_{\epsilon}(\TI)}  \|\eta(0) \| \nonumber \\
&\le \sqrt{\frac{c_2}{c_1}} \frac{1}{\epsilon} e^{-\frac{c_3}{2\epsilon}\TI+\frac{1}{2}\bar{w}_{\epsilon}}  \|\eta(0) \| 
\label{bound xtx0 bar c3}
\end{align}
Then, from \eqref{eqn:TstarTIBound}, we have,
\begin{align}
\label{eq:bound_eta_TI}
\|\eta(t) \| \biggr|_{t=\TI} \le \sqrt{\frac{c_2}{c_1}} \frac{1}{\epsilon} e^{-\frac{c_3}{2\epsilon}(T^*-\Delta T)+\frac{1}{2}\bar{w}_{\epsilon}}  \|\eta(0) \|.
\end{align}

Furthermore, because $e^{-\alpha}\le e^{-1}/\alpha, \forall \alpha \ge 0$, we have:
\begin{align}
\label{ieq: c3 condition}
\frac{1}{\epsilon}e^{-\frac{c_3}{2\epsilon}(T^*-\Delta T)} \le \frac{2e^{-1}}{(T^*-\Delta T)c_3}.
\end{align}

Then it implies that there exists a $\bar{c}_3 \le c_3$ such that:
\begin{align}
\label{bound exp bar c3}
\frac{1}{\epsilon}e^{-\frac{c_3}{2\epsilon}(T^*-\Delta T)} \le \frac{1}{\epsilon}e^{-\frac{\bar{c}_3}{2\epsilon}(T^*-\Delta T)} \le \frac{2e^{-1}}{(T^*-\Delta T)c_3}.
\end{align}
The above inequality follows by the fact that $\frac{1}{\epsilon}e^{-\frac{c_3}{2\epsilon}(T^*-\Delta T)}$ is a monotonically decreasing function of $c_3$, and that for any real numbers $a \le b$, there always exists a number $c \in [a:b]$ such that $a\le c\le b$. To be more specific, let $\bar c_3^a$ be the solution of:
\begin{align}
\label{def c31 bar}
 \frac{1}{\epsilon}e^{-\frac{\bar{c}_3^a}{2\epsilon}(T^*-\Delta T)} = \frac{2e^{-1}}{(T^*-\Delta T)c_3},
\end{align}
then any $\bar c_3 \in [\bar c_3^a:c_3]$ will satisfy \eqref{bound exp bar c3}.

We can then define
\begin{align}
 \label{bar_we_def}
& \bar{w}_{\epsilon}:=\frac{c_3-\bar{c}_3}{\epsilon}(T^*-\Delta T) \ge 0,\\
 \label{eq:c3_wbar_c3bar}
\Rightarrow   &  -\frac{c_3}{2\epsilon}(T^*-\Delta T)+\frac{1}{2}\bar{w}_{\epsilon}=-\frac{\bar{c}_3}{2\epsilon}(T^*-\Delta T).
\end{align}

Next, plugging \eqref{eq:c3_wbar_c3bar} into \eqref{eq:bound_eta_TI}, we have,
\begin{align}
\label{relaxed RES condition x}
\|\eta(t) \| \biggr|_{t=\TI} &\le \sqrt{\frac{c_2}{c_1}} \frac{1}{\epsilon} e^{-\frac{\bar{c}_3}{2\epsilon}(T^*-\Delta T)}  \|\eta(0) \|, \\
\label{relaxed RES condit
	ion Ve}
V_{\epsilon}(\eta(t))\bigr|_{t=\TI} &\le e^{-\frac{\bar{c}_3}{\epsilon}(T^*-\Delta T)}V_{\epsilon}(\eta(0)).
\end{align}

We now complete the proof of Lemma \ref{lem:BoundXt} by substituting \eqref{bound exp bar c3} into \eqref{relaxed RES condition x} to obtain \eqref{BoundXtX0Lemma1}. The solution of $\bar w_{\epsilon}$ can be found in \eqref{bar_we_def}.
\end{proof}

Using Lemma \ref{lem:BoundXt}, we then can follow the same protocol of the proof in \cite[Theorem 2]{RESCLF:AmGaGrSr:TAC12} until equation \cite[(64)]{RESCLF:AmGaGrSr:TAC12}.

We then define $\beta_1(\epsilon) = \frac{c_2}{\epsilon^2} L_{\Rx}^2 e^{-\frac{c_3}{\epsilon} (T^*-\Delta T)} $ and $\bar{\beta}_1(\epsilon) = \frac{c_2}{\epsilon^2} L_{\Rx}^2 e^{-\frac{\bar{c}_3}{\epsilon} (T^*-\Delta T)} $ where $L_{\Rx}$, defined after \cite[(59)]{RESCLF:AmGaGrSr:TAC12}, is the Lipschitz constant for $\Rx$.
Because $\beta_1(0^+) := \lim_{\epsilon \searrow 0} \beta_1(\epsilon) = 0$, then there exists an $\overline{\epsilon}$ such that
\begin{eqnarray}
\beta_1(\epsilon)  < c_1 \qquad \qquad \forall \qquad 0 < \epsilon < \overline{\epsilon}.
\end{eqnarray}
and for each  $\epsilon$, 
if we define $\bar c_3^b$ be the solution of
\begin{align}
\label{def c32 bar}
\frac{c_2}{\epsilon^2} L_{\Rx}^2 e^{-\frac{\bar{c}_3^b}{\epsilon} (T^*-\Delta T)}=c_1, 
\end{align}
then
for $\bar c_3 \in (\bar c_3^b : c_3]$, we have,
\begin{eqnarray}
\label{barbeta1bound}
\beta_1(\epsilon) \le \bar{\beta}_1(\epsilon) < c_1.
\end{eqnarray} 
However, as presented in proof of Lemma \ref{lem:BoundXt}, $\bar c_3$ also needs to satisfy \eqref{bound exp bar c3}. Therefore, in order to guarantee the satisfaction of both \eqref{bound exp bar c3} and \eqref{barbeta1bound}, $\bar c_3$ needs to be chosen in the following set
\begin{align}
\label{c3 bar set}
\bar c_3 \in \{[\bar c_3^a : c_3] \cap (\bar c_3^b : c_3]\},
\end{align} 
where $\bar c_3^a$ and $\bar c_3^b$ are defined in \eqref{def c31 bar} and \eqref{def c32 bar} respectively.

The rest of the proof follows from the proof of \cite[Theorem 2]{RESCLF:AmGaGrSr:TAC12} using $\bar \beta_1$ instead of $\beta_1$. We finish our proof with the value of $\bar w_{\epsilon}$ determined via \eqref{bar_we_def}, in which the feasible set of $\bar c_3$ is defined in \eqref{c3 bar set}.


\bibliographystyle{IEEEtran}
\bibliography{Quan_ref}

\begin{thebibliography}{10}
\providecommand{\url}[1]{#1}
\csname url@samestyle\endcsname
\providecommand{\newblock}{\relax}
\providecommand{\bibinfo}[2]{#2}
\providecommand{\BIBentrySTDinterwordspacing}{\spaceskip=0pt\relax}
\providecommand{\BIBentryALTinterwordstretchfactor}{4}
\providecommand{\BIBentryALTinterwordspacing}{\spaceskip=\fontdimen2\font plus
\BIBentryALTinterwordstretchfactor\fontdimen3\font minus
  \fontdimen4\font\relax}
\providecommand{\BIBforeignlanguage}[2]{{%
\expandafter\ifx\csname l@#1\endcsname\relax
\typeout{** WARNING: IEEEtran.bst: No hyphenation pattern has been}%
\typeout{** loaded for the language `#1'. Using the pattern for}%
\typeout{** the default language instead.}%
\else
\language=\csname l@#1\endcsname
\fi
#2}}
\providecommand{\BIBdecl}{\relax}
\BIBdecl

\bibitem{FK:Book}
R.~A. Freeman and P.~V. Kokotovi\'{c}, \emph{Robust Nonlinear Control
  Design}.\hskip 1em plus 0.5em minus 0.4em\relax Birkh\"{a}user, 1996.

\bibitem{Sontag_Ctrl}
E.~D. Sontag, ``A 'universal' construction of {A}rtstein theorem on nonlinear
  stabilization,'' \emph{Systems \& Control Letters}, vol.~13, pp. 117--123,
  1989.

\bibitem{TorqueSaturation:GaSrAmGr:TRO14}
K.~Galloway, K.~Sreenath, A.~D. Ames, and J.~W. Grizzle, ``Torque saturation in
  bipedal robotic walking through control lyapunov function based quadratic
  programs,'' \emph{IEEE Access}, vol.~PP, no.~99, p.~1, April 2015.

\bibitem{CBF:CBFQPwithACC:CDC14}
A.~D. Ames, J.~Grizzle, and P.~Tabuada, ``Control barrier function based
  quadratic programs with application to adaptive cruise control,'' in
  \emph{IEEE Conference on Decision and Control}, 2014, pp. 6271--6278.

\bibitem{CBF:Ames:ExperimentalACC:ACC15}
A.~Mehra, W.-L. Ma, F.~Berg, P.~Tabuada, J.~W. Grizzle, and A.~D. Ames,
  ``Adaptive cruise control: Experimental validation of advanced controllers on
  scale-model cars,'' in \emph{American Control Conference}, 2015, pp.
  1411--1418.

\bibitem{CBF:Guofan:SafetyCriteria:ACC15}
G.~Wu and K.~Sreenath, ``Safety-critical and constrained geometric control
  synthesis using control lyapunov and control barrier functions for systems
  evolving on manifolds,'' in \emph{American Control Conf.}, 2015, pp.
  2038--2044.

\bibitem{CBF:Quan:Footstep:ADHS15}
Q.~Nguyen and K.~Sreenath, ``Safety-critical control for dynamical bipedal
  walking with precise footstep placement,'' in \emph{The IFAC Conference on
  Analysis and Design of Hybrid Systems}, 2015, pp. 147--154.

\bibitem{CBF:Ames:CBFBiped:ACC15}
S.-C. Hsu, X.~Xu, and A.~D. Ames, ``Control barrier function based quadratic
  programs with application to bipedal robotic walking,'' in \emph{American
  Control Conf.}, 2015, pp. 4542--4548.

\bibitem{CBF:Ames:RobustnessCBF:ADHS15}
X.~Xu, P.~Tabuada, J.~W. Grizzle, and A.~D. Ames, ``Robustness of control
  barrier functions for safety critical control,'' in \emph{The IFAC Conference
  on Analysis and Design of Hybrid Systems}, 2015, pp. 54--61.

\bibitem{R2pmlr-v120-khojasteh20a}
M.~J. Khojasteh, V.~Dhiman, M.~Franceschetti, and N.~Atanasov, ``Probabilistic
  safety constraints for learned high relative degree system dynamics,'' in
  \emph{Conf. on Learning for Dyn. and Control}, Jun 2020, pp. 781--792.

\bibitem{R3ChOrMuBu2019}
R.~Cheng, G.~Orosz, R.~M. Murray, and J.~W. Burdick, ``End-to-end safe
  reinforcement learning through barrier functions for safety-critical
  continuous control tasks,'' in \emph{AAAI}, 2019, pp. 3387--3395.

\bibitem{cheng2020safe}
R.~Cheng, M.~J. Khojasteh, A.~D. Ames, and J.~W. Burdick, ``Safe multi-agent
  interaction through robust control barrier functions with learned
  uncertainties,'' in \emph{2020 59th IEEE Conference on Decision and Control
  (CDC)}.\hskip 1em plus 0.5em minus 0.4em\relax IEEE, 2020, pp. 777--783.

\bibitem{R5pmlr-v120-taylor20a}
A.~Taylor, A.~Singletary, Y.~Yue, and A.~Ames, ``Learning for safety-critical
  control with control barrier functions,'' in \emph{Conference on Learning for
  Dynamics and Control}.\hskip 1em plus 0.5em minus 0.4em\relax PMLR, 10--11
  Jun 2020, pp. 708--717.

\bibitem{feng2015optimization}
S.~Feng, E.~Whitman, X.~Xinjilefu, and C.~G. Atkeson, ``Optimization-based full
  body control for the darpa robotics challenge,'' \emph{Journal of Field
  Robotics}, vol.~32, no.~2, pp. 293--312, 2015.

\bibitem{step:footstep:Russ:Humanoid14}
R.~Deits and R.~Tedrake, ``Footstep planning on uneven terrain with
  mixed-integer convex optimization.'' \emph{Proceedings of the 2014 IEEE/RAS
  International Conference on Humanoid Robots}, pp. 279--286, 2014.

\bibitem{qin2003MPCsurvey}
S.~J. Qin and T.~A. Badgwell, ``A survey of industrial model predictive control
  technology,'' \emph{Control engineering practice}, vol.~11, no.~7, pp.
  733--764, 2003.

\bibitem{wang2010fastMPC}
Y.~Wang and S.~Boyd, ``Fast model predictive control using online
  optimization,'' \emph{IEEE Transactions on control systems technology},
  vol.~18, no.~2, pp. 267--278, 2010.

\bibitem{book:OptimalControl}
F.~L. Lewis, D.~Vrabie, and V.~L. Syrmos, \emph{Optimal Control}.\hskip 1em
  plus 0.5em minus 0.4em\relax John Wiley \& Sons, 2012.

\bibitem{book:RobustAndOptimalControl:Doyle:96}
K.~Zhou, J.~Doyle, and K.~Glover, \emph{Robust and Optimal Control}.\hskip 1em
  plus 0.5em minus 0.4em\relax Prentice Hall PTR, 1996.

\bibitem{iss:Sontag:EJC95}
E.~D. Sontag, ``On the input-to-state stability property,'' \emph{European
  Journal of Control}, vol.~1, pp. 24--36, 1995.

\bibitem{iss:Sontag:SCL95}
E.~D. Sontag and Y.~Wang, ``On characterizations of the input-to-state
  stability property,'' \emph{Systems \& Control Letters}, vol.~24, pp.
  351--359, 1995.

\bibitem{iss:Sontag:integral:TAC2000}
D.~Angeli, E.~Sontag, and Y.~Wang, ``A characterization of integral
  input-to-state stability,'' \emph{IEEE Transactions on Automatic Control},
  vol.~45, pp. 1082--1097, 2000.

\bibitem{iss:switched:Liberzon:Automatica07}
L.~Vu, D.~Chatterjee, and D.~Liberzon, ``Input-to-state stability of switched
  systems and switching adaptive control,'' \emph{Automatica}, vol.~43, pp.
  639--646, 2007.

\bibitem{iss:impulsive:Liberzon:Automatica08}
J.~P. Hespanhaa, D.~Liberzon, and A.~R. Teel, ``Lyapunov conditions for
  input-to-state stability of impulsive systems,'' \emph{Automatica}, vol.~44,
  pp. 2735--2744, 2008.

\bibitem{iss:hybrid:ARTeel:SCL09}
C.~Cai and A.~R. Teel, ``Characterizations of input-to-state stability for
  hybrid systems,'' \emph{Systems \& Control Letters}, vol.~58, pp. 47--53,
  2009.

\bibitem{smc:book:TheoryAndApplications:98}
C.~Edwards and S.~Spurgeon, \emph{Sliding Mode Control: Theory And
  Applications}.\hskip 1em plus 0.5em minus 0.4em\relax CRC Press, 1998.

\bibitem{smc:DynamicSystems:IJC92}
S.~V. Drakunov and V.~I. Utkin, ``Sliding mode control in dynamic systems,''
  \emph{International Journal of Control}, vol.~55, pp. 1029--1037, 1992.

\bibitem{su1993sliding}
C.-Y. Su and T.-P. Leung, ``A sliding mode controller with bound estimation for
  robot manipulators,'' \emph{IEEE Transactions on Robotics and Automation},
  vol.~9, no.~2, pp. 208--214, 1993.

\bibitem{fahimi2007sliding}
F.~Fahimi, ``Sliding-mode formation control for underactuated surface
  vessels,'' \emph{IEEE Transactions on Robotics}, vol.~23, no.~3, pp.
  617--622, 2007.

\bibitem{Spong1992}
M.~W. Spong, ``On the robust control of robot manipulators,'' \emph{IEEE
  Transactions on Automatic Control}, vol.~37, no.~11, pp. 1782--1786, November
  1992.

\bibitem{LiBr1998}
F.~Lin and R.~D. Brandt, ``An optimal control approach to robust control of
  robot manipulators,'' \emph{IEEE Trans. on Rob. and Autom.}, vol.~14, no.~1,
  pp. 69--77, Feb 1998.

\bibitem{LiGe2013}
Z.~Li and S.~S. Ge, ``Adaptive robust controls of biped robots,'' \emph{IET
  Control Theory and Applications}, vol.~7, no.~2, pp. 161--175, 2013.

\bibitem{Ruina_Robust2dWalking}
M.~Kelly and A.~Ruina, ``Non-linear robust control for inverted-pendulum 2d
  walking,'' \emph{IEEE International Conference on Robotics and Automation},
  2015.

\bibitem{Tedrake_RobustMotionPlanning}
A.~Majumdar and R.~Tedrake, ``Funnel libraries for real-time robust feedback
  motion planning,'' \emph{Int. J. of Rob. Res.}, 2016.

\bibitem{KoVaJRVa2017}
S.~Kousik, S.~Vaskov, M.~Johnson-Roberson, and R.~Vasudevan, ``Safe trajectory
  synthesis for autonomous driving in unforeseen environments,'' in \emph{ASME
  2017 Dynamic Systems and Control Conference}.\hskip 1em plus 0.5em minus
  0.4em\relax American Society of Mechanical Engineers Digital Collection.

\bibitem{FKHeFiDeTo2017}
D.~Fridovich-Keil, S.~L. Herbert, J.~F. Fisac, S.~Deglurkar, and C.~J. Tomlin,
  ``Planning, fast and slow: A framework for adaptive real-time safe trajectory
  planning,'' in \emph{2018 IEEE International Conference on Robotics and
  Automation (ICRA)}.\hskip 1em plus 0.5em minus 0.4em\relax IEEE, 2018, pp.
  387--394.

\bibitem{HaGr2013}
K.~A. Hamed and J.~W. Grizzle, ``Robust event-based stabilization of periodic
  orbits for hybrid systems: Application to an underactuated 3d bipedal
  robot,'' in \emph{American Control Conference}, 2013.

\bibitem{HaGr2015}
------, ``Iterative robust stabilization algorithm for periodic orbits of
  hybrid dynamical systems: Application to bipedal running,''
  \emph{IFAC-PapersOnLine}, vol.~48, no.~27, pp. 161--168, 2015.

\bibitem{KaPoSuTa2015}
K.~Karydis, I.~Poulakakis, J.~Sun, and H.~G. Tanner, ``Probabilistically valid
  stochastic extensions of deterministic models for systems with uncertainty,''
  \emph{The International Journal of Robotics Research}, vol.~34, no.~10, pp.
  1278--1295, 2015.

\bibitem{SaBl2015}
C.~O. Saglam and K.~Byl, ``Meshing hybrid zero dynamics for rough terrain
  walking,'' in \emph{IEEE International Conference on Robotics and
  Automation}, 2015, pp. 5718--5725.

\bibitem{MeLa2013}
M.~Benallegue and J.-P. Laumond, ``Metastability for high-dimensional walking
  systems on stochastically rough terrain,'' in \emph{Robotics: Science and
  Systems}, 2013.

\bibitem{ByTe2009}
K.~Byl and R.~Tedrake, ``Metastable walking machines,'' \emph{The International
  Journal of Robotics Research}, vol.~28, no.~8, pp. 1040--1064, August 2009.

\bibitem{NgSr2014Robust_RSS}
Q.~Nguyen and K.~Sreenath, ``Optimal robust control for bipedal robots through
  control lyapunov function based quadratic programs,'' in \emph{Robotics:
  Science and Systems}, 2015.

\bibitem{CBF:Quan:RobustCBF:ACC16}
------, ``Optimal robust control for constrained nonlinear hybrid systems with
  application to bipedal locomotion,'' in \emph{American Control Conference},
  2016.

\bibitem{RESCLF:AmGaGrSr:TAC12}
A.~D. Ames, K.~Galloway, J.~W. Grizzle, and K.~Sreenath, ``Rapidly
  {E}xponentially {S}tabilizing {C}ontrol {L}yapunov {F}unctions and {H}ybrid
  {Z}ero {D}ynamics,'' \emph{IEEE Trans. Automatic Control}, vol.~59, no.~4,
  pp. 876--891, 2014.

\bibitem{CVXGEN:MaBo2012}
J.~Mattingley and S.~Boyd, ``{CVXGEN}: a code generator for embedded convex
  optimization,'' \emph{Optimization and Engineering}, vol.~13, no.~1, pp.
  1--27, March 2012.

\bibitem{LipschitzContinuity:MoPoAm:CDC13}
B.~Morris, M.~J. Powell, and A.~D. Ames, ``Sufficient conditions for the
  {Lipschitz} continuity of {QP}-based multi-objective control of humanoid
  robots,'' \emph{Proc. 52nd IEEE Conf. Decision and Control}, pp. 2920--2926,
  2013.

\bibitem{NgSr2014Adaptive_ACC}
Q.~Nguyen and K.~Sreenath, ``L1 adaptive control for bipedal robots withcontrol
  lyapunov function based quadratic programs,'' in \emph{American Control
  Conference}, 2015, pp. 862--867.

\bibitem{KHA02}
H.~Khalil, \emph{Nonlinear Systems - 3rd Edition}.\hskip 1em plus 0.5em minus
  0.4em\relax Upper Saddle River, NJ: Pearson, 2002.

\bibitem{reed1989instability}
J.~S. Reed and P.~A. Ioannou, ``Instability analysis and robust adaptive
  control of robotic manipulators,'' \emph{IEEE transactions on robotics and
  automation}, vol.~5, no.~3, pp. 381--386, 1989.

\bibitem{TorqueSaturation:GaSrAmGr:Access15}
K.~Galloway, K.~Sreenath, A.~D. Ames, and J.~W. Grizzle, ``Torque saturation in
  bipedal robotic walking through control lyapunov function based quadratic
  programs,'' \emph{IEEE Access}, vol.~PP, no.~99, p.~1, April 2015.

\bibitem{CHABAOPLWECAGR02}
C.~Chevallereau, G.~Abba, Y.~Aoustin, F.~Plestan, E.~R. Westervelt,
  C.~{Canudas-de-Wit}, and J.~W. Grizzle, ``{RABBIT}: A testbed for advanced
  control theory,'' vol.~23, no.~5, pp. 57--79, October 2003.

\bibitem{WEBUGR04}
E.~R. Westervelt, G.~Buche, and J.~W. Grizzle, ``Experimental validation of a
  framework for the design of controllers that induce stable walking in planar
  bipeds,'' vol.~24, no.~6, pp. 559--582, June 2004.

\end{thebibliography}

\begin{IEEEbiography}[{
		\includegraphics[width=1in,height=1.25in,clip,keepaspectratio]{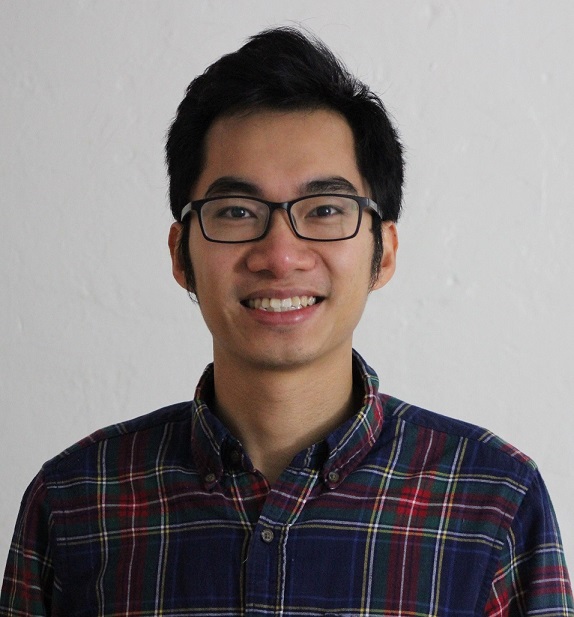}
	}]{Quan Nguyen}
is an Assistant Professor of Aerospace and Mechanical Engineering at the University of Southern California. Prior to joining USC, he was a Postdoctoral Associate in the Biomimetic Robotics Lab at the Massachusetts Institute of Technology (MIT). He received his Ph.D. from Carnegie Mellon University (CMU) in 2017 with the Best Dissertation Award. 
His research interests span different control and optimization approaches for highly dynamic robotics including nonlinear control, trajectory optimization, real-time optimization-based control, robust and adaptive control. 
\end{IEEEbiography}
\vspace{-20pt}
\begin{IEEEbiography}[{
		\includegraphics[width=1in,height=1.25in,clip,keepaspectratio]{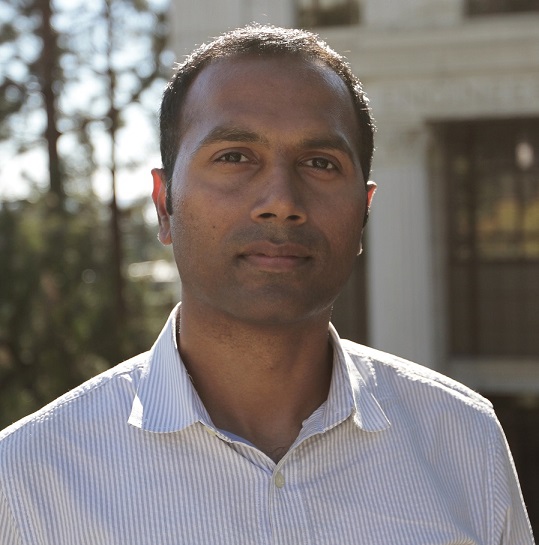}
	}]{Koushil Sreenath}
is an Assistant Professor of Mechanical Engineering at University of California, Berkeley.  He received the Ph.D. degree in Electrical Engineering: Systems and the M.S. degree in Applied Mathematics from the University of Michigan at Ann Arbor, MI, in 2011.  He was an Assistant Professor at Carnegie Mellon University between 2013-2017.  His research interest lies at the intersection of highly dynamic robotics and applied nonlinear control.  
\end{IEEEbiography}

\end{document}